\documentclass{article}
\usepackage[a4paper]{geometry}
\setlength{\topmargin}{-0.6in}
\setlength{\oddsidemargin}{-0.3cm}
\setlength{\evensidemargin}{-0cm}
\setlength{\textwidth}{6.6in}
\setlength{\textheight}{10in}
\setlength{\columnsep}{0.3125in}
\setlength{\headheight}{0.3125in}
\setlength{\headsep}{0.125in}

\usepackage{comment}
\usepackage{paralist}
\usepackage{float}
\usepackage{multicol}

\usepackage{nth}
\usepackage{latexsym,amssymb,amsthm,graphicx,float,color}
\usepackage[colorlinks=true, allcolors=blue]{hyperref}
\usepackage{amsthm}
\usepackage[symbol]{footmisc}
\theoremstyle{definition} 
\newtheorem{defn}{Definition} 
\usepackage{graphicx}
\usepackage{braket}



\usepackage{caption}
\usepackage{sidecap}
\usepackage{paralist}
\usepackage{float}

\usepackage[english]{babel}
\usepackage[utf8x]{inputenc}
\usepackage[T1]{fontenc}
\usepackage{listofsymbols}
\usepackage{ica_symbols}

\usepackage[ruled,algo2e]{algorithm2e}




\usepackage{latexsym,amssymb,amsthm,graphicx,float,color, natbib}
\usepackage[colorinlistoftodos]{todonotes}
\usepackage{amsmath}

\usepackage[normalem]{ulem}

\newcommand{\R}{\mathbb{R}}
\newcommand{\inear}{f}

\def\vecc#1{[#1]}

\newcommand{\bp}{\begin{pmatrix}}
\newcommand{\ep}{\end{pmatrix}}


\newcommand\sa[1][]{{#1{s}}_{\rm a}}

\newtheorem{theorem}{Theorem}
\newtheorem{proposition}[theorem]{Proposition}
\newcommand{\Z}{\mathbb{Z}}

\usepackage{pgfplots}
\newif\ifpng

\pngfalse

\title{Focused blind deconvolution}
\author{Pawan Bharadwaj$^*$,
Laurent Demanet,
and
Aim{\'e} Fournier \\ Massachusetts Institute of Technology}


\begin{document}
\maketitle

\begin{abstract}
	We introduce a novel multichannel blind deconvolution (BD)
	method
	that
	extracts \emph{sparse} and \emph{front-loaded} impulse responses from the channel outputs, i.e., 
	their convolutions with a single arbitrary source.
	A crucial feature of this formulation 
	is that it 
	doesn't encode support restrictions on the unknowns, unlike most prior work on BD.
	The indeterminacy inherent to BD, 
	which is difficult to resolve with a traditional 
	$\ell_1$ penalty on the impulse responses,
	is resolved in our method because it seeks
	a first approximation
	where the impulse responses are:
	``maximally white{''} --- encoded as the energy focusing near zero lag of the
		impulse-response auto-correlations; and 
		``maximally front-loaded{''} --- encoded as the energy focusing near zero time 
		of the impulse responses.
	Hence we call the method
	focused
	blind deconvolution
	(FBD).
	The focusing constraints 
	are relaxed as the iterations progress.
	Note that FBD
	requires %
	the  
	duration
	of the channel outputs to be 
	longer than that of the unknown impulse responses.

	A multichannel blind deconvolution 
	problem that is appropriately formulated by sparse and front-loaded impulse responses 
	arises in seismic inversion, where the impulse responses 
	are 
	the Green's function evaluations at different receiver locations, 
	and the operation of a drill bit inputs the 
	noisy and correlated source signature into the 
	subsurface.
	We demonstrate the benefits of FBD 
	using 
	seismic-while-drilling
	numerical 
	experiments, where
	the noisy data recorded at the receivers 
	are hard to interpret,
	but FBD can provide the processing essential to 
	separate
	the drill-bit (source) signature 
	from the interpretable Green's function.

	%

	%
	%

	%
	%
	%
	%
	%
	
	%
	%

\end{abstract}


\section{Introduction}
There are situations 
where seismic experiments are to be performed in environments with a noisy source e.g., when an operating borehole drill is loud enough to reach the receivers.
The {source} generates an {unknown, noisy} signature $s(t)$ at time $t$; {one} typically fail{s} to 
dependably extract the source signature 
despite deploying an attached receiver.
For example, 
the exact signature of the 
operating drill bit 
in a borehole environment cannot be recorded because there would always be some material interceding 
before the receiver \citep{aminzadeh2013geophysics}.
The noisy-source signals propagate through the 
subsurface, and result in the data at the receivers, denoted by $d_i(t)$.
Imaging of the  
data to characterize the subsurface (seismic inversion) 
is only possible when 
they are {deconvolved to discover} 
the subsurface Green's function.
Similarly, 
in room acoustics,
the speech signals $s(t)$ recorded as $d_i(t)$ at a microphone array
are distorted and sound reverberated
due to the reflection of walls, furniture and other objects.
Speech recognition and compression is simpler 
when the reverberated records $d_i(t)$ 
at the microphones are deconvolved 
to recover the 
clean speech signal 
\citep{liu2001blind,yoshioka2012making}.
%

The response of many such physical systems 
to a noisy source is to produce multichannel outputs.
The $\Nr$ observations or channel outputs, in the noiseless case, are 
modeled as the output of a linear system that
convolves (denoted by $\ast$) a 
source (with signature $s(t)$) with
the impulse response function: 
\begin{eqnarray}
	d_{i}(t)=\{s\ast g_i\}(t). 
\end{eqnarray}
Here, $g_i(t)$ is the $i^\text{th}$ channel impulse response and
$d_i(t)$ is the $i^\text{th}$ channel output.
The impulse responses 
contain physically meaningful information about the channels.
Towards the goal of 
extracting the vector of impulse responses 
$\vecc{g_1(t),\ldots,g_{\Nr}(t)}$ or simply $\vecc{g_i}$
and the source function $s(t)$,
we consider 
an unregularized 
least-squares fitting of the
channel-output vector 
$\vecc{d_1(t),\ldots,d_{\Nr}(t)}$ or $\vecc{d_i}$. 
This corresponds to 
the least-squares multichannel 
deconvolution \citep{amari1997multichannel,douglas1997multichannel,sroubek2003multichannel}
of the channel outputs  
with an unknown blurring kernel, i.e., the source signature.
It is well known that
severe non-uniqueness issues {are} inherent to multichannel blind deconvolution (BD);
there could be many
possible estimates of $\vecc{g_i}$,
which when convolved with the corresponding $s$
will result in the recorded $\vecc{d_i}$ (as formulated in eq. \ref{eqn:decont2} below).

Therefore, in this paper, 
we add two additional constraints to the BD framework
that seek a solution where $\vecc{g_i}$ are:
\begin{enumerate}
\item
	\emph{maximally white}
		---
	encoded as the
	energy 
	focusing 
	near zero lag (i.e., energy diminishing at non-zero lags)
	of the impulse-response auto-correlations and 	
\item
	\emph{maximally front-loaded} ---
	encoded as the
	energy 
	focusing near zero time
	of the most front-loaded impulse response.	
\end{enumerate}
We refer to them as \emph{focusing} constraints. 
They are not equivalent to $\ell_1$ minimization,\footnote{That is, minimizing $\sum_t|g_i(t)|$ to promote sparsity.} although they also enforce a form of sparsity.
These are 
relaxed as the iterations progress to enhance the fitting of the channel outputs.
Focused blind deconvolution (FBD) 
employs the 
focusing constraints to resolve the indeterminacy inherent to the BD problem.
We identify that it is more 
favorable to use the constraints
in succession after 
decomposing the BD problem 
into two separate 
least-squares optimization problems.
The first problem, where it is sufficient to employ the first constraint,
fits the \emph{interferometric or cross-correlated} channel outputs \citep{bharadwajibd},
rather than the raw outputs, and solves for the interferometric impulse response.
The second problem relies on the outcome of the previous problem and 
completes FBD by employing the second constraint
and solving for the impulse responses from their cross-correlations.
This is shown in the Figure~\ref{fig:fbd_parts}.
According to
our numerical experiments, 
FBD
can effectively 
retrieve $\vecc{g_i}$
provided the following 
conditions are met:
\begin{itemize}
\item  
	the {duration} length of the unknown impulse responses
	should be much {briefer} than 
	{that of} the channel outputs;
\item  the channels are \emph{sufficiently dissimilar} in the sense of their 
	transfer-function polynomials being \emph{coprime} in the $z$-domain.
\end{itemize}
In the seismic inversion context, 
the first 
condition is {economically beneficial}
, as {usual drilling practice enables us to }
record noisy data for a time period
much longer compared to the wave-propagation time.
Also, since drilling is anyway necessary, its use as a signal source to estimate $\vecc{g_i}$ is a free side benefit.
We show that 
the second condition {can} be 
satisfied in the seismic experiments by deploying \emph{sufficiently dissimilar} receivers, as defined below, 
which may yet be arrayed variously in a borehole, or surface-seismic geometry.

\cite{xu1995least} showed 
that multichannel blind deconvolution is dependent 
on the condition that the transfer functions are coprime, i.e.,
they do not share common roots in the $z$-domain.
The BD algorithms in \cite{subramaniam1996cepstrum,huang2002adaptive} 
are also based on this prerequisite.
In this regard,
due to the difficulty of factoring the high order channel polynomials,  
\cite{gaubitch2005adaptive} proposed a method for identification of common roots of 
two channel polynomials.
Interestingly,
they have observed 
that the roots do not have to be exactly equal to be considered common in BD.
\cite{khong2008algorithms} uses clustering to efficiently extract clusters of near-common roots.
In contrast to these methods,
FBD doesn't need the identification of the common roots of the channel polynomials.

{S}urvey{s} of {BD}
algorithms in the signal and image processing literature {are} given in 
\cite{kundur1996blind} and \cite{campisi2016blind}. 
A series of results on blind deconvolution appeared in the literature 
using different sets of assumptions on the unknowns.
The authors in
\cite{ahmed2015convex} and \cite{li2016rapid}
show that BD can be efficiently solved 
under certain subspace conditions on both the source signature and impulse responses even in the single-channel case.
\cite{ahmed2016leveraging}
showed the recovery of the unknowns in multichannel BD 
assuming that the source is sparse in some known basis
and the impulse responses belong to known random subspaces.
The experimental results in \cite{romberg2013multichannel} show the successful joint recovery 
of Gaussian impulse responses with known support that are 
convolved with a single Gaussian source signature.
BD algorithms with various assumptions on input statistics are proposed in \cite{tong1994blind, tong1995blind} and 
\cite{tong1998multichannel}.
Compared to the work in these articles, FBD doesn't require any 
assumptions on
\begin{inparaenum}
\item support of the unknowns,
\item statistics of the source signature and
\item the underlying physical models;\footnote{Some seismic BD algorithms 
	design deconvolution operators using an estimated subsurface velocity model \citep{haldorsen1995walk}.}
\end{inparaenum}
although, 
it does apply a type of sparsity prior on the $\vecc{g_i}$. 
Note also that 
regularization in the sense of minimal $\ell_1$ i.e., mean-absolute norm,
as some methods employ, does not fully address the type {of} indeterminacy associated with BD.

Deconvolution is 
also an
important step in the processing workflow 
used by exploration geophysicists
to improve
the resolution of the seismic records \citep{ulrych1995wavelet, liu2003survey, van2008robust}.
\cite{robinson1957predictive} developed predictive decomposition \citep{wold1938study}
of the seismic record into a source signature and a white or uncorrelated time sequence corresponding to the Earth's impulse response.
In this context, the impulse responses $\vecc{g_i}$ correspond 
to the 
unique subsurface Green's function {$g(\vec{x},t)$} 
evaluated at the receiver locations {$\vecc{\vec{x}_i}$}, 
where the seismic-source signals are recorded.
Spiking deconvolution \citep{robinson1980geophysical, yilmaz2001seismic} 
estimates a Wiener filter that
increases the \emph{whiteness} of the seismic records, therefore,
removing the effect of the seismic sources.
In order to alleviate the non-uniqueness issues in blind deconvolution,
recent algorithms 
in geophysics{:}
\begin{itemize}
\item take advantage of the multichannel nature of the seismic data \citep{kaaresen1998multichannel,kazemi2014sparse, nose2015fast, liu2016sparse};
\item sensibly choose the initial estimates of the 
$\vecc{g_i}$ in order to converge to a desired solution \citep{liu2016sparse}; {and/or}
\item constrain the sparsity of the 
$\vecc{g_i}$ \citep{kazemi2014sparse}.
\end{itemize}
\cite{kazemi2016surface} used sparse BD 
to estimate source and receiver wavelets while processing seismic records acquired on land.
The {BD} algorithms in the current geophysics literature
handle roughly impulsive source wavelets that are due to 
{well-controlled} sources, as opposed to the {noisy and uncontrollable sources} in FBD{,
about which we assume very little}.
It has to be observed that 
building initial estimates of the 
$\vecc{g_i}$ is difficult for {any algorithm,} 
as the {functional} distance{s} between the 
$\vecc{d_i}$ and the actual 
$\vecc{g_i}$ are quite large. {Unlike standard methods,}
FBD does not require an extrinsic starting guess.

The Green's function retrieval is also the subject of 
\emph{seismic interferometry}
\citep{schuster2004interferometric,snieder2004extracting,shapiro2005high,wapenaar2006unified,curtis2006seismic,schuster2009seismic}, where
the
cross-correlation (denoted by $\otimes$) 
between the records at two receivers with indices $i$ and $j$, 
\begin{eqnarray}
	d_{ij}(t) = \{d_{i}\otimes d_{j}\}(t) = 
	\{
	\sa
	\ast 
	g_{ij}
	\}
	(t),
	\label{eqn:intro1}
\end{eqnarray}
is
treated as a proxy for 
the cross-correlated or \emph{interferometric} Green's function 
$g_{ij}{=g_{i}\otimes g_{j}}$.
A classic result in interferometry states that 
a summation on the $g_{ij}$ 
over various {noisy sources}, 
evenly distributed in space,
will result in the Green's function due to a \emph{virtual source} 
at one of the receivers \citep{wapenaar2006green}.
In the absence of multiple evenly dis\-tributed {noisy sources},
the inter\-fero\-metric Green's functions can {still} be directly used for imaging
\citep{claerbout1968synthesis,draganov2006seismic,borcea2006coherent,demanet2017convex,vidal2014retrieval},
{although this requires knowledge of the source signature.} 
The above equation shows
that the 
goal of interferometry, i.e.,
construction of 
$g_{ij}$
given {$d_{ij}$},
is impeded 
by the source  
auto-correlation $\sa {=s\otimes s}$.
In an impractical
situation with 
a zero-mean white {noisy source},
the {$d_{ij}$} 
would be precisely proportional to {$g_{ij}$}
{; but this is not at all realistic, so we 
don't assume a white source signature in FBD and eschew any concepts like virtual sources.

The {failure} 
of 
seismic {noisy sources} {to be white}\footnote{
For example,
the noise generated 
by
drill bit operations is heavily correlated 
in time \citep{gradl2012analysis, rector1991use, joyce2001introduction}.}
is already {well known} 
in seismic interferometry \citep{curtis2006seismic,vasconcelos2008interferometry1}.
To extract the response of a building,
\cite{snieder2006extracting} propose a deconvolution of the re\-corded 
waves at different locations
in the building 
rather than the cross-correlation.
%
%
Seismic interferometry by multi-dimen\-sional 
deconvolution \citep{wapenaar2008passive, wapenaar2011seismic,van2011controlled,broggini2014data}
uses an estimated interferometric point spread function as a deconvolution operator.
The results obtained from this approach depend on 
the accuracy of the estimated point spread function, which relies on 
a 
{uniform} distribution of 
multiple {noisy sources} in space.
In contrast to these  
seismic{-}interferometry{-}by{-}deconvolution approaches, 
FBD is designed to perform a \emph{blind} deconvolution in the presence of a single {noisy source} and 
doesn't assume an even distribution of the noisy sources.
In the presence of multiple {noisy sources}, 
as preprocessing to FBD,
one has to use
seismic blind source separation.
For example, \cite{makino2005blind} and \cite{bharadwaj2017deblending} used
independent component analysis for convolutive mixtures to 
decompose the multi-source recorded data into isolated records involving 
one source at a time.

The remainder of this paper is organized as follows.
We 
explain the indeterminacy of unregularized BD problem in section~\ref{sec:bd}.
In section~\ref{sec:fbd},
we introduce FBD and argue that it can
resolve
this indeterminacy.
This paper contains no theoretical guarantee,
but we regard the formulation of
such theorems 
as very interesting.
In section~\ref{sec:ex},
we demonstrate the benefits of FBD
using both idealized  
and practical synthetic  seismic experiments.
%
%
%


%
%

%
%
%
%

%
%

\section{Multichannel Blind Deconvolution}
\label{sec:bd}

The $z$-domain representations are denoted in this paper
using the corresponding capital letters. 
For example, the $i^\text{th}$ channel output
after a $z$-transform is denoted by 
\[D_i(z)=\sum_{t=0}^{T}d_i(t)z^{-t}.\] 
The traditional algorithmic
approach to solve BD 
is a least-squares fitting 
of the channel output vector $\vecc{d_i : \{0,\ldots,T\}\to \R}$ 
to jointly 
optimize two functions i.e., the impulse response vector
associated with different channels 
$\vecc{g_i : \{0,\ldots,\tau\}\to \R}$ and the source signature $s : \{0,\ldots,T\}\to \R$.
The joint optimization can be suitably carried out
using
alternating minimization
\citep{ayers1988iterative,sroubek2012robust}: 
in one cycle, we fix one function and optimize the other, and then fix the other and optimize the first. 
Several cycles are expected to be performed to reach convergence.
\medskip
\begin{defn}[{LSBD: Least-squares} Blind Deconvolution]\label{def:bd} 
	It is a
	basic formulation that minimizes the least-squares functional:
	\begin{flalign}
		\label{eqn:bd1}
		U(s,\vecc{g_i}) &=
		\sum_{k=1}^{\Nr} \sum_{t=0}^{T} \{d_{k}(t)-\{s\ast g_{k}\}(t)\}^2{;} \\
		\label{eqn:bd2}
		(\hat{s}, \vecc{\hat{g}_i}) &=  \mathop{\mathrm{arg\,min}}_{s,\vecc{g_i}} \qquad U \\
		  & \mathrm{subject \,\, to} \quad \sum_{t=0}^{T} s^2(t)=1{.} 
	\end{flalign}
	Here, $\hat{s}$ and $\vecc{\hat{g}_i}$ denote the predicted or estimated 
	functions 
	corresponding to the 
	unknowns 
	$s$ and
	$\vecc{g_i}$, respectively. 
	We have fixed the {energy (i.e., sum-of-squares)} norm of $s$ in order to 
	resolve the 
	scaling ambiguity.
	In order to effectively solve this
	problem,
	it is required 
	that the domain length $T+1$ of the first unknown function $s$
	be longer than the domain length $\tau+1$ of the second unknown function $\vecc{g_i}$ \citep{xu1995least}.
\end{defn} 

Ill-posedness is the major challenge of BD, 
irrespective of the number of channels.
For instance,
when the number of channels $\Nr=1$,
an undesirable minimizer for
eq.~\ref{eqn:bd1}
would be the temporal Kronecker $\delta(t)$ 
for the impulse response, 
making the source signature equal the channel output. 
Even with 
$\Nr\ge 1$, the LSBD problem 
can only be solved up to some indeterminacy.
To quantify the ambiguity, 
consider that a filter $\phi(t)\ne\delta(t)$
and its inverse 
$\phi^{-1}(t)$ ({where} $\phi \ast \phi^{-1} = \delta$)
can be applied to each element of 
$\vecc{g_i}$ and $s$ respectively, and leave their convolution unchanged:
\begin{eqnarray}
	d_{i}(t)=\{{s}\ast {g}_{i}\}(t)=\{\{{s}\ast\phi^{-1}\} \ast \{{g}_{i}\ast\phi\}\}(t){.}
	\label{eqn:decont2}
\end{eqnarray}
If furthermore ${s}\ast\phi^{-1}$ and $\vecc{{g}_{i}\ast\phi}$ obey the constraints otherwise 
placed on $s$ and $\vecc{g_i}$, namely in our case that 
$s$ and $\vecc{g_i}$ 
should have duration lengths $T+1$ and $\tau+1$ respectively, 
and the unity of the source energy,
then we are in 
presence of a true ambiguity not resolved by those constraints. 
We then speak of $\phi$ as belonging to a set $\mathbb{Q}$ of undetermined 
filters. 
This formalizes the lack of uniqueness \citep{xu1995least}: 
for any possibly desirable solution $(\hat{s}, \vecc{\hat{g}_i})$ 
and every $\phi\in\mathbb{Q}$,
$(\hat{s}\ast\phi^{-1}, \vecc{\hat{g}_{i}\ast\phi})$ 
is an additional {possibly} undesirable solution. 
Taking all $\phi\in\mathbb{Q}$ spawns all solutions in a set
$\mathbb{P}$ 
that equally minimize the 
least-squares functional in 
eq.~\ref{eqn:bd1}.
Accordingly, in the $z$-domain,
the elements in $\vecc{\hat{G}_i}$ of 
almost any solution
in $\mathbb{P}$
share some common root(s), which are
associated with its corresponding unknown filter $\Phi(z)$.
In other words, the channel polynomials in $\vecc{\hat{G}_i}$
of nearly all the solutions 
are \emph{not coprime}.
A particular element in $\mathbb{P}$ has 
its corresponding $\vecc{\hat{G}_i}$ with 
the fewest 
common roots ---
we call it the \emph{coprime} solution.
%
%

\section{Focused Blind Deconvolution}
\label{sec:fbd}
\medskip

\begin{figure*}
	\centering
	\IfFileExists{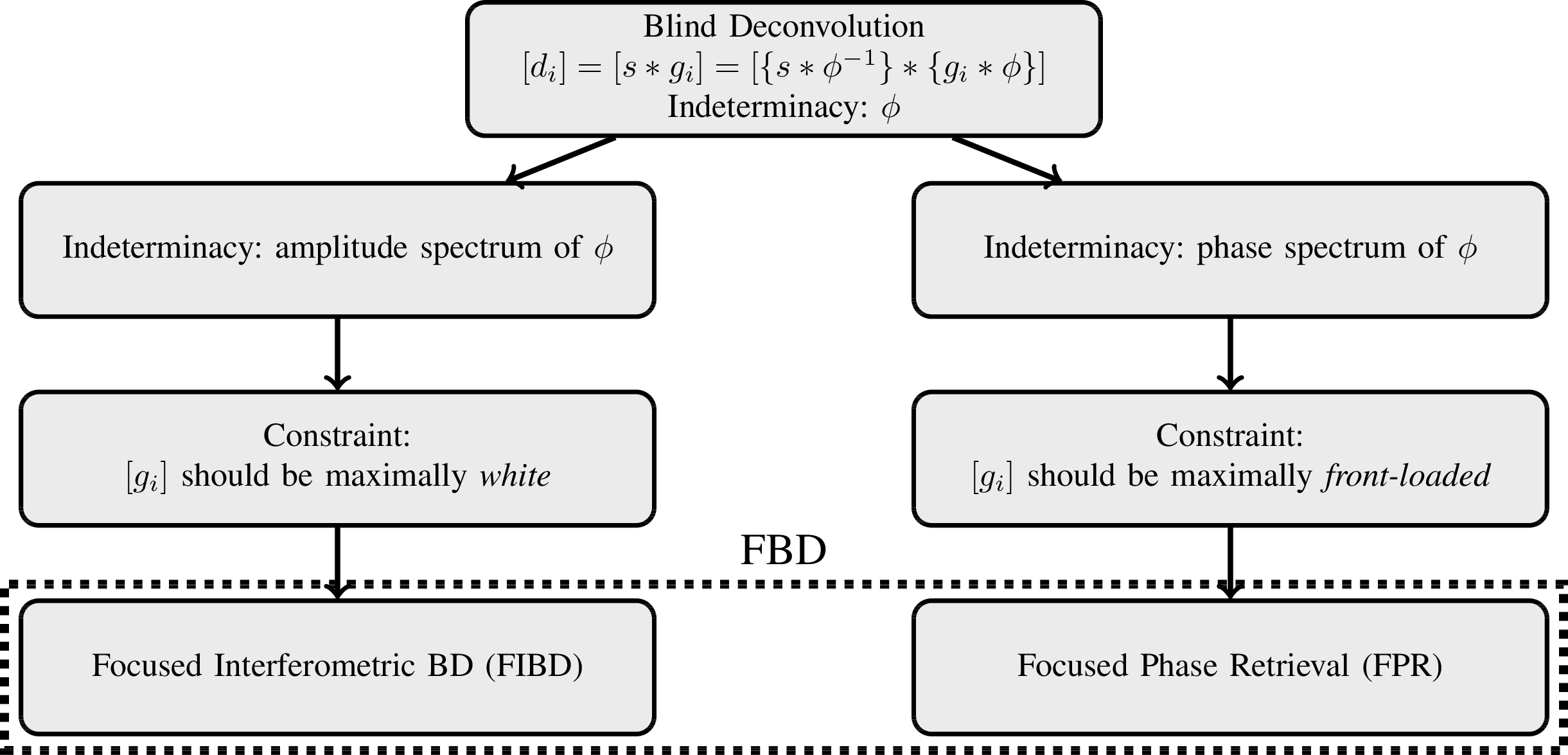}
	{
	\includegraphics[width=0.8\textwidth]{FIG/flowchart.png}
		}
	{
	\input{./fbd_parts.tex}
	}
	\caption{Focused blind deconvolution uses two focusing constraints to resolve the 
	indeterminacies of the multichannel blind deconvolution. 
	Note that this is not an algorithmic flowchart, but
	explains the two components of the regularization in FBD.}
	\label{fig:fbd_parts}
\end{figure*}

The aim of focused blind deconvolution is to seek the coprime solution
of the
LSBD problem.
Otherwise, 
the channel polynomials $\vecc{\hat{G}_i}$ will typically be 
\emph{less sparse} and \emph{less front-loaded}
in the time domain owing to the common roots
that are 
associated with the
undetermined filter $\phi$ of eq.~\ref{eqn:decont2}.
For example,
including a common root $r$ to the 
polynomials in $\vecc{\hat{G}_i}$ implies an additional factor $(z-r)$ that 
corresponds to 
subtracting $\vecc{r\,g_i(t)}$ from $\vecc{g_i(t+1)}$ in the time domain,
so that the sparsity is likely to reduce.
Therefore, 
the intention and key innovation of FBD is to minimize the number of common roots 
in the channel polynomials $\vecc{\hat{G}_i}$ associated with $\Phi(z)$.
It is difficult to achieve the same result with standard 
ideas from sparse regularization.

%
%
Towards this end, 
focused blind deconvolution solves a series  
of two least-squares optimization problems
with
focusing constraints. 
These constraints, 
described in the following subsections, can guide FBD to converge to the desired coprime solution.
Note that
this prescription does not guarantee
that the recovered impulse responses 
should consistently 
match the true impulse responses;\footnote{In the seismic context, FBD does not guarantee 
that the recovered Green's function satisfies the wave equation with impulse source.}
nevertheless, we empirically 
encounter a 
satisfactory recovery in most practical situations 
of seismic inversion,
as discussed below.
The first problem considers
fitting 
the cross-correlated channel outputs
to jointly 
optimize two functions i.e., the impulse-response cross-correlations $\vecc{g_{ij}}$
between every possible channel pair
and the source-signature auto-correlation $\sa$.
The focusing constraint in this problem will resolve the indeterminacy due to the 
amplitude spectrum of the unknown filter  
$\phi$ in eq.~\ref{eqn:decont2} such that the impulse responses $\vecc{g_i}$ are \emph{maximally white}.
Then the second problem completes the focused blind deconvolution
by fitting 
the above-mentioned impulse-response cross-correlations, to estimate $\vecc{g_i}$ from $\vecc{g_{ij}}$.
The focusing constraint in this problem will resolve 
the indeterminacy due to the phase spectrum of 
the unknown filter
$\phi$ such that the impulse responses $\vecc{g_i}$ are \emph{maximally front-loaded}.
As shown in the Figure~\ref{fig:fbd_parts},
these two problems will altogether 
resolve the indeterminacies of BD discussed in
the previous section.

\subsection{Focused Interferometric Blind Deconvolution}

In order to isolate and resolve the indeterminacy due to the amplitude spectrum of $\phi(t)$, 
we consider a reformulated multichannel blind deconvolution problem.
This reformulation deals with
the 
cross-correlated or interferometric channel outputs, $d_{ij} : \{-T,\ldots,T\}\to \R$, 
as in eq.~\ref{eqn:intro1},  
between every possible
channel pair \citep[cf.,][]{demanet2017convex}, therefore 
ending the indeterminacy due to the phase spectrum of $\phi(t)$.
\begin{defn}[IBD: Interferometric Blind Deconvolution]\label{def:ibd} 
	We use this problem to lay the groundwork 
	for the next problem, and benchmarking.
	The optimization is carried {out} over the source-signature auto-correlation 
	$\sa : \{-T,\ldots,T\} \to \R$
	and the cross-correlated or interferometric impulse responses
	$g_{ij} : \{-\tau,\ldots,\tau\} \to \R${:}
	\begin{flalign}
		\label{eqn:ibd1}
		V(\sa,\vecc{g_{ij}}) &=
		\sum_{k=1}^{\Nr}\sum_{l=k}^{\Nr} \sum_{t=-T}^{T} \{d_{kl}(t)-\{\sa\ast g_{kl}\}(t)\}^2{;} \\
		\label{eqn:ibd2}
		(\sa[\hat], \vecc{\hat{g}_{ij}}) &=  \mathop{\mathrm{arg\,min}}_{\sa, \vecc{g_{ij}}} \qquad V \\ 
		  & \mathrm{subject \,\, to} \quad \sa(0)=1; \quad \sa(t)=\sa(-t){.} \nonumber
	\end{flalign}
	Here, we 
	denoted the 
	$(\Nr+1)\Nr/2$-vector
	of unique interferometric impulse responses 
	$\vecc{g_{11}, g_{12},\ldots,g_{22},g_{23},\ldots,g_{\Nr\Nr}}$ by simply 
	$\vecc{g_{ij}}$.
	We fit the 
	interferometric outputs $d_{ij}$ after max normalization. The motivation 
	of conveniently fixing  $\sa(0)$ is not only to resolve the scaling ambiguity
	but also 
	to converge to a solution, where the  
	necessary inequality condition $\sa(t) \le \sa (0)\,\forall\,t$ is satisfied.
	More generally,
	the function $\sa(t)$ is the autocorrelation of $s(t)$ 
	if and only if the Toeplitz matrix formed from its translates
	is positive semidefinite, i.e., $\mbox{Toeplitz}(\sa) \succeq 0$.
	This is a result known as Bochner's theorem.
	This semidefinite constraint 
	can 
	be realized by projecting $\mbox{Toeplitz}(\sa)$ 
	onto the cone of positive 
	semidefinite matrices at each iteration of the nonlinear least-squares iterative method \citep{vandenberghe1996semidefinite}.
	Nonetheless, in the numerical experiments, we observe
	convergence to acceptable solutions by just using the weaker constraints of IBD, when is data noise is sufficiently small.

	Similar to LSBD,
	IBD has unwanted minimizers 
	obtained by applying a filter $\psi^{-1}$ to $\sa$ and $\psi$ to 
	each element of $\vecc{g_{ij}}$, 
	but it is easily computed that $\psi$ 
	has to be \emph{real and nonnegative} in the frequency domain ($|z|=1$) 
	and related to the amplitude spectrum of $\phi(t)$.
	Therefore,
	its
	indeterminacy is lesser
	compared to that of the  
	LSBD approach.

\end{defn} 
\medskip

\begin{defn}[FIBD: Focused Interferometric Blind Deconvolution]\label{def:fibd} 
	{FIBD starts by seeking} a solution
	of the under\-determined IBD 
	problem {where} the 
	impulse responses are ``maximally white", 
	{as measured by the concentration of their autocorrelation {near} 
	zero lag}.
	Towards that end,
	we 
	use a {regularizing term} that penalizes the energy of 
	the impulse-response auto-correlations
	proportional to the
	non-zero lag time $t$,
	before returning to solving the regular IBD problem.
	\begin{flalign}
		\label{eqn:fibd1}
		W(\sa,\vecc{g_{ij}}) &=  
		V(\sa,\vecc{g_{ij}})
		+ \alpha 
		\sum_{k=1}^{\Nr} \sum_{t=-\tau}^{\tau}{t^2}g_{kk}^2(t){;} \\
		\label{eqn:fibd2}
		(\sa[\hat], \vecc{\hat{g}_{ij}}) &=  \mathop{\mathrm{arg\,min}}_{\sa, \vecc{g_{ij}}} \qquad W \\ 
		 &  \mathrm{subject\,to} \quad \sa(0)=1;\quad\sa(t)=\sa(-t){.} \nonumber
	\end{flalign}
	Here,
	$\alpha>0$ is a  regularization parameter.
	We consider a homotopy \citep{osborne2000new} approach to solve FIBD, 
	where {eq.}~\ref{eqn:fibd2} is solved {in succession for decreasing} values of $\alpha$, 
	{the result obtained for previous $\alpha$ {being} used as an initializer for the {cycle that uses} 
	the current $\alpha$.}
The focusing constraint 
resolves the indeterminac{y} of IBD.
Minimizing the energy of the  
impulse-response auto-correlations $\vecc{g_{ii}}$ 
proportional to the non-zero lag time 
will result in a solution where the impulse responses are 
{heuristically} as white as possible.
In other words, FIBD 
minimizes 
the number of common roots,  
associated with 
the IBD indeterminacy $\Psi(z)$,
in the estimated polynomials $\vecc{\hat{G}_{ij}}$, 
facilitating the goal of FBD to seek the coprime solution.
The entire 
workflow of FIBD is 
shown in the Algorithm~\ref{alg:fibd}.
	In most of the numerical examples, we simply choose
	$\alpha=\infty$ first, and then $\alpha=0$.
\end{defn}

%
%
%

\subsection{Focused Phase Retrieval}

FIBD resolves  
a component of 
the LSBD ambiguity and estimates the 
interferometric impulse responses.
This should be followed by
phase retrieval (PR) --- a least-squares fitting of the
interferometric impulse responses $\vecc{\hat{g}_{ij}}$ to optimize the impulse responses
$\vecc{g_i}$.
The estimation of $\vecc{g_i}$ in PR is hindered by the
unresolved LSBD ambiguity due to 
the phase spectrum of $\phi(t)$.
In order to
resolve the remaining ambiguity, 
we use a focusing constraint in PR.

\begin{defn}[LSPR: Least-squares Phase Retrieval]\label{def:lspr} 
	Given the interferometric impulse responses 
	$\vecc{g_{ij}}$, the aim of the 
	phase retrieval problem is to estimate unknown $\vecc{g_i}$.
	\begin{flalign}
	\label{eqn:lspr}
		X(\vecc{g_{i}}) &=
		\sum_{k=1}^{\Nr}\sum_{l=i}^{\Nr} \sum_{t=-\tau}^{\tau} \{\hat{g}_{kl}(t)-\{g_k \otimes g_l\}(t)\}^2{;} \\
		\vecc{\hat{g}_{i}}&=  \mathop{\mathrm{arg\,min}}_{\vecc{g_{i}}} \qquad X  
	\end{flalign}
%
LSPR 
	is ill-posed. Consider a \emph{white} filter $\chi(t)\ne\delta(t)$, where
$\chi \otimes \chi=\delta$, that can be
applied to each of the impulse responses, and leave their cross-correlations unchanged:
\begin{eqnarray}
	\label{eqn:lspr_unknown}
	g_{ij}(t) = \{g_i \otimes g_j\}(t) = \{\{g_i\ast\chi\}\otimes\{g_j\ast\chi\}\}(t).
\end{eqnarray}
%
If furthermore $g_i\ast\chi$ obeys the constraint otherwise placed, 
namely in our case that the impulse responses should have duration length 
$\tau$, then we are in the presence of a true ambiguity not resolved by this constraint.
It is obvious that 
the filter $\chi(t)$ is linked to the remaining unresolved component
of the LSBD indeterminacy, i.e., 
	the phase spectrum of $\phi(t)$. 
\end{defn}

\begin{defn}[FPR: Focused Phase Retrieval]\label{def:fpr} 
	FPR seeks a solution of the 
	underdetermined LSPR problem
	where the impulse responses $\vecc{g_i}$ are ``maximally front-loaded''.
	It starts with an optimization that 
	fits
	the interferometric impulse responses only linked with 
	the most front-loaded channel\footnote{In the seismic context, 
	the most front-loaded channel corresponds to the closest receiver
	$i=\inear$
	to the noisy source, assuming that the traveltime of the waves 
	propagating from the source to this receiver is the shortest.}
	$\inear$, 
	before returning to solving the regular LSPR problem.
	We use  
	a regularizing term that penalizes the energy of the most front-loaded 
	response $g_{\inear}$ proportional to the time $t\ne 0$:
	\begin{flalign}
	\label{eqn:fpr}
		Y(\vecc{{g}_i}) &= \sum_{k=1}^{\Nr}\sum_{t=-\tau}^{\tau} 
		\{\hat{g}_{k \inear}(t)-\{g_k \otimes g_\inear\}(t)\}^2  
		+ \beta\,\sum_{t=0}^{\tau} g_{\inear}^2(t)t^2; \\
		\vecc{\hat{g}_{i}} &=  \mathop{\mathrm{arg\,min}}_{g_{i}} \qquad Y.  
	\end{flalign}
	Here, $\beta\ge 0$ is a regularization parameter.
	Again, we consider a homotopy approach to solve this optimization problem,
	where the above equation is solved in succession for decreasing values of $\beta$.
FPR chooses the 
undetermined filter $\chi$ such that
	$g_i \ast \chi$ has the energy maximally concentrated or focused at the front (small $t$).
Minimizing  
the second moment of the squared impulse responses will result in a solution 
where the impulse responses are 
as front-loaded as possible.
The entire 
workflow of FPR is 
shown in the Algorithm~\ref{alg:fpr}.
In all the numerical examples, we simply choose
$\beta=\infty$ first, and then $\beta=0$.
Counting on the estimated impulse responses from FPR, we return to the LSBD formulation 
in order to finalize the BD problem.

\end{defn}

\subsection{Sufficiently Dissimilar Channel Configuration}
FBD seeks the coprime 
solution of the ill-posed LSBD problem.
Therefore, 
for the success of FBD, 
it is important 
that the true transfer functions
do not share any common zeros in the $z$-domain.
This requirement is satisfied when the 
channels are chosen
to be
\emph{sufficiently dissimilar}.
The channels
are said to be sufficiently dissimilar 
unless there exists a spurious $\gamma$ 
and $\vecc{g_i}$ such that 
the true impulse-response vector $\vecc{g^0_{i}} = \vecc{\gamma\ast g_{i}}$.
Here, $\gamma$ is a 
filter that 
\begin{inparaenum}
\item is independent of the channel index $i$;
\item belongs to the set $\mathbb{Q}$ of filters that cause indeterminacy of the LSBD problem;
\item doesn't simply shift $g_i$ in time.
\end{inparaenum}
{In our experiments}, FBD reconstructs {a good approximation of} the true 
impulse responses
if the channels are sufficiently dissimilar.
Otherwise, FBD outputs
an 
undesirable solution
$(s^0\ast\gamma^{-1},\vecc{g_{i}})$,
as opposed to the desired $(s^0,\vecc{\gamma\ast g_{i}})$, where $s^0$ is the true 
source signature.
In the next section,
we will show numerical examples 
with both similar and dissimilar channels.

\LinesNumbered
\begin{algorithm2e} 
	\caption{Focused Interferometric Blind Deconvolution. Alternating minimization of $W$, as in
	{eq.}~\ref{eqn:fibd2}, is carried out {in succession for decreasing} values of $\alpha$.}
	\label{alg:fibd}
	\begin{itemize}
		\item[{\bf Preparation}]
		{\color{white} gg}\\
			generate the cross-correlated or interferometric channel outputs
			$\vecc{d_{ij}}$ and normalize with $d_{11}(0)$
		\item[{\bf Parameters (with example)}]
		{\color{white} gg}\\
		tolerance for convergence $\epsilon=10^{-8}$\\
			$\vec{\alpha}=\set{\infty,0}$
	\item[{\bf Initialize}] 
		{\color{white} gg}\\
			$\sa(t)\leftarrow\begin{cases}
				0, &\text{if } t\ne0\\
				1, &\text{otherwise}
			\end{cases}
				$\\
			$g_{ij}(t)\leftarrow
			\begin{cases}
				0, &\text{if } i=j\,\text{and }t\ne0 \\
				\operatorname{rand}(), & \text{otherwise}
			\end{cases}
			$\\
		\item[{\bf Results}]
		{\color{white} gg}\\
			interferometric transfer functions $\vecc{\hat{g}_{ij}}$ \\
			autocorrelation for the source signature $\hat{s}_a$
	\end{itemize}
	{\bf Kernel}\\
	\ForEach(\tcc*[f]{loop over decreasing $\alpha$}){$\alpha\in{\vec{\alpha}}$}{
		$W_1=\infty$; $W_2=\infty$; $W_{1p}=W_1$; $W_{2p}=W_2$; $\Delta W=\infty$\\
		\While{$\Delta W>\epsilon$} 
	{

		{${s}_a \leftarrow  \operatorname*{arg\,min}\limits_{s_{a}} W(\sa, \vecc{{g}_{ij}})$ s.t. $\sa(0)=1$ \& $\sa(t)=\sa(-t)$ }
		\tcc*[f]{updating source}\\
	$W_{1p} \leftarrow W_1$;
	     {$W_1\leftarrow W({s}_a, \vecc{{g}_{ij}})$} \\
	{$\vecc{{g}_{ij}} \leftarrow  \operatorname*{arg\,min}\limits_{\vecc{g_{ij}}} W({s}_a, \vecc{g_{ij}})$}
	
	\tcc*[f]{updating interferometric transfer functions}\\
	$W_{2p} \leftarrow W_2$;
	     {$W_2\leftarrow W({s}_a, \vecc{{g}_{ij}})$} \\

	     {$\Delta W=\operatorname{max}(\set{W_{1p}-W_1, W_{2p}-W_2}$}) \tcc*[f]{measure convergance}
	     }
	     }
	{$\vecc{\hat{g}_{ij}}\leftarrow \vecc{g_{ij}$}};$\,$
	{$\hat{s}_{a}\leftarrow s_{a}$}
\end{algorithm2e}

\LinesNumbered
\begin{algorithm2e} 
	\caption{Focused Phase Retrieval. Solving $Y$, as in {eq.}~\ref{eqn:fpr}, in succession for decreasing values of $\beta$.
	Then solving $X$ in {eq.}~\ref{eqn:lspr}.}
	\label{alg:fpr}
	\begin{itemize}
		\item[{\bf Preparation}]
		{\color{white} gg}\\
			get the interferometric filters $\vecc{\hat{g}_{ij}}$ using FIBD
		\item[{\bf Parameters (with example)}]
		{\color{white} gg}\\
			$\vec{\beta}=\set{\infty,0}$\\
			index of the most front-loaded channel $\inear$
	\item[{\bf Initialize}] 
		{\color{white} gg}\\
			$g_{i}(t)\leftarrow
			\begin{cases}
				0, &\text{if } i=\inear\,\text{and }t\ne0 \\
				\operatorname{rand}(), & \text{otherwise}
			\end{cases}
				$\\
		\item[{\bf Results}]
		{\color{white} gg}\\
			filters $\vecc{\hat{g}_{i}}$ \\
	\end{itemize}
	{\bf Kernel}\\
	\ForEach(\tcc*[f]{loop over decreasing $\beta$}){$\beta\in{\vec{\beta}}$}{

		$\vecc{{g}_i} \leftarrow  \operatorname*{arg\,min}\limits_{\vecc{g_{i}}} Y(\vecc{{g}_{i}})$\\
	     }
{ 
	$\vecc{{g}_{i}} \leftarrow  \operatorname*{arg\,min}\limits_{\vecc{g_{i}}} X(\vecc{g_{i}})$
	     }\tcc*[f]{return to LSPR}

	{$\vecc{\hat{g}_{i}}\leftarrow \vecc{g_{i}}$}
\end{algorithm2e}

\section{Numerical Simulations}
\label{sec:ex}

\subsection{Idealized Experiment I}
We consider an
experiment with $\Nr=20$, $\tau=30$ and $T=400$.
The aim is to
reconstruct the true impulse responses $\vecc{g^{0}_{i}}$,
plotted in Figure~\ref{fig:simple_example_bd0}a, 
from the channel outputs generated using a 
Gaussian random source signature $s^{0}$.
The impulse responses of 
similar kind  
are of 
particular interest in seismic inversion and room acoustics 
as they reveal the arrival
of 
energy,
propagated from an impulsive source,
at the receivers in the medium.
In this case, the arrivals have onsets of 6$\,$s and 10$\,$s at the first channel and they 
curve 
linearly and hyperbolically, respectively.
The linear arrival is 
the earliest arrival that doesn't undergo scattering.
The hyperbolic arrival is likely to represent a wave that is 
reflected or scattered from an interface between two materials with different acoustic impedances.
\subsubsection*{LSBD}

\begin{figure*}
	\begin{center}
		\IfFileExists{./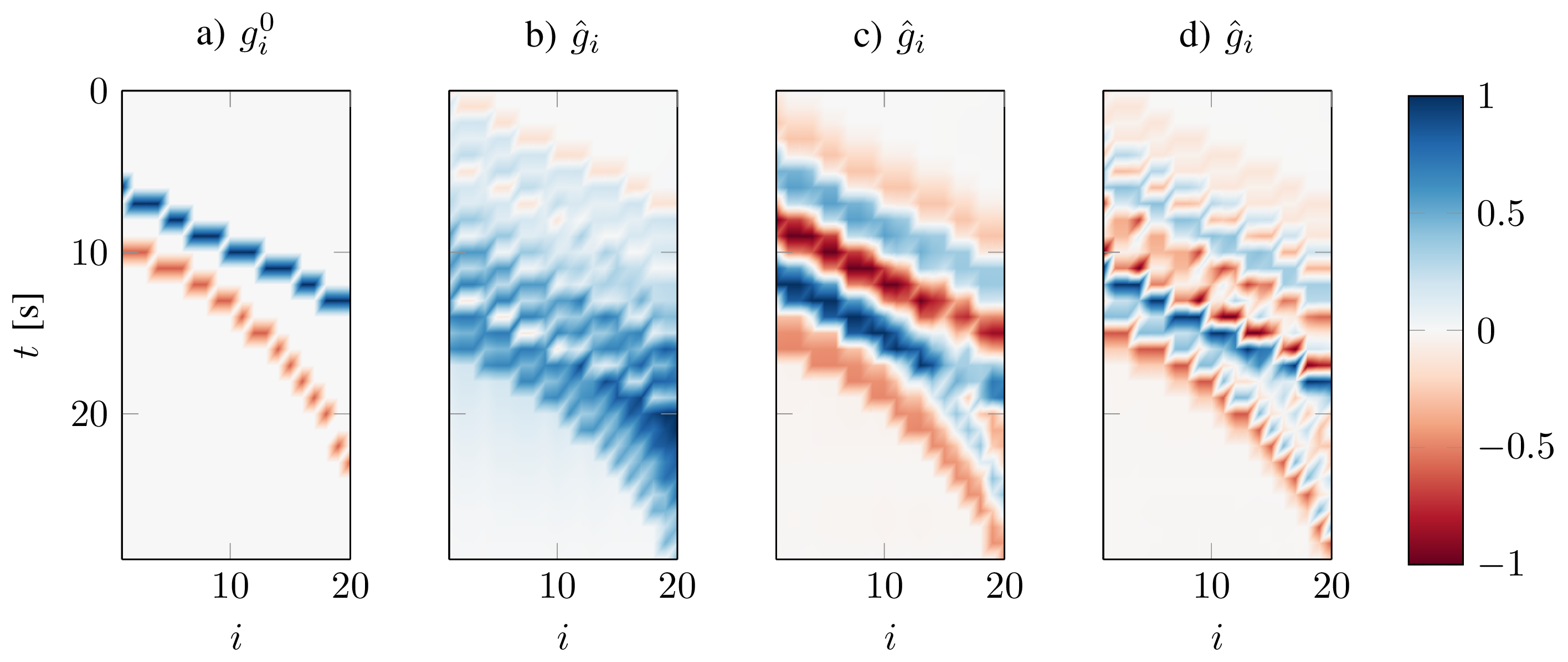}
		{
		\includegraphics[width=0.95\textwidth]{./FIG/simple_example_bd0.png}
			}
		{
	\begin{tikzpicture}
		\input{./codes/simple_example_bd0.tex}
	\end{tikzpicture}
		}
	\end{center}
	\caption{
		Idealized Experiment I.
	 The results are displayed as images that use 
	 the full range of colors in a colormap. 
	 Each pixel of these images corresponds to a time $t$ and a channel index $i$.
	 Impulse responses: a) true; b)---d) undesired.
	 }
	\label{fig:simple_example_bd0}
\end{figure*}

\begin{figure*}
	\begin{center}
		\IfFileExists{./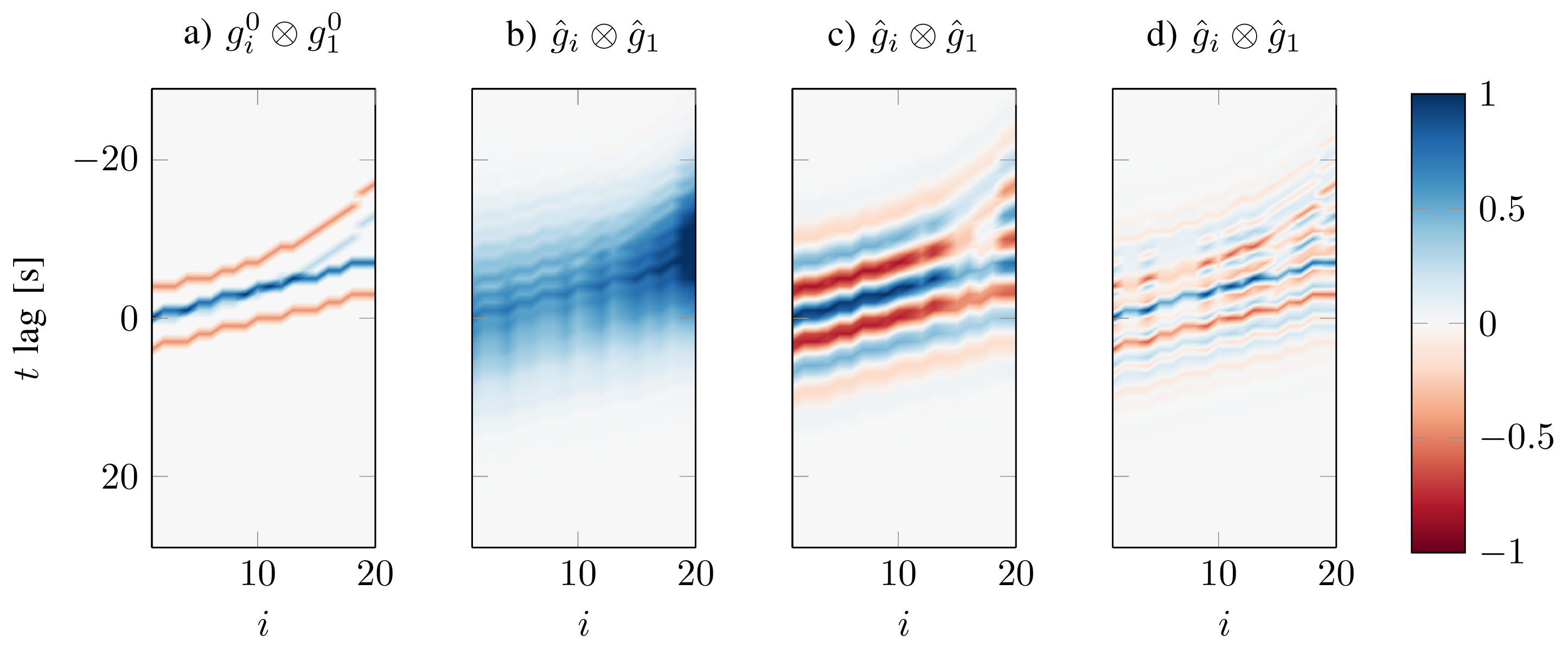}
		{
		\includegraphics[width=0.95\textwidth]{./FIG/simple_example_bd1.png}
			}
		{
	\begin{tikzpicture}
		\input{./codes/simple_example_bd1.tex}
	\end{tikzpicture}
		}
	\end{center}
	\caption{
		Idealized Experiment I.
		Cross-correlations of impulse responses corresponding to the Figure~\ref{fig:simple_example_bd0}:
	 a) true; b)---d) undesired.
	 }
	\label{fig:simple_example_bd1}
\end{figure*}

\begin{figure*}
	\centering
	\IfFileExists{./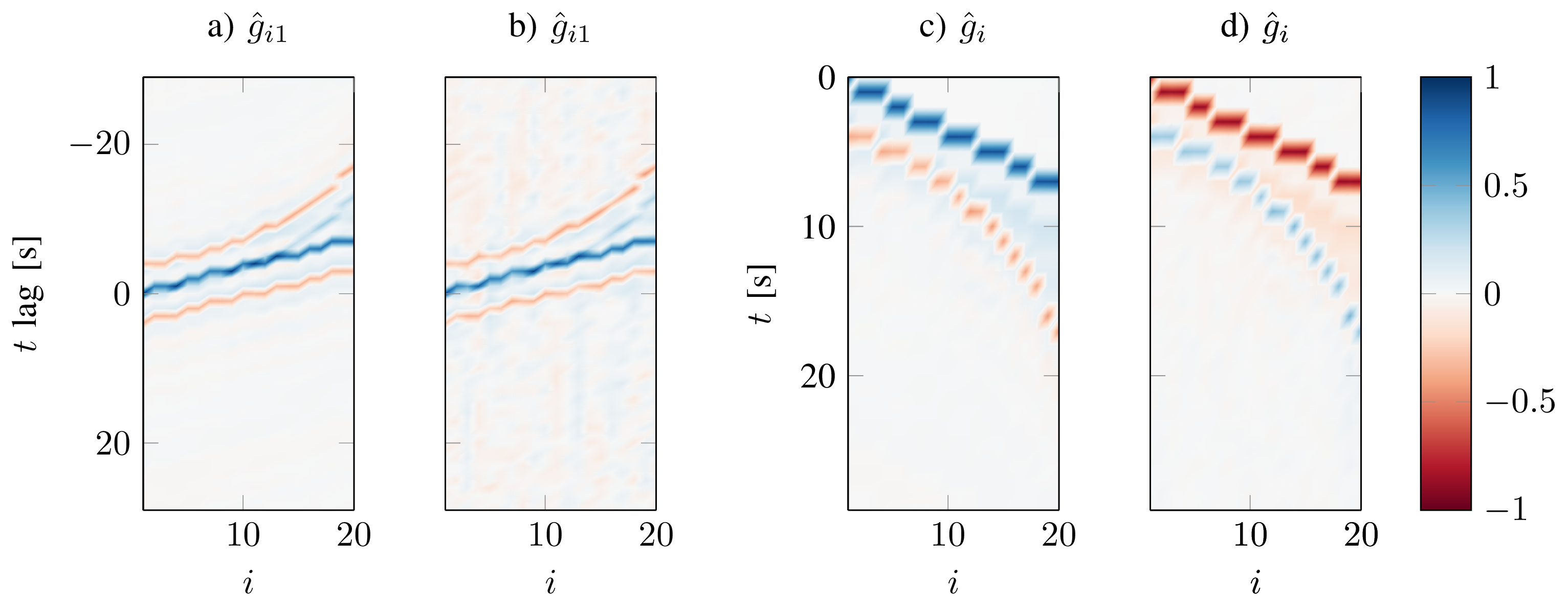}
	{
	\includegraphics[width=0.95\textwidth]{./FIG/simple_example_fbd.png}
		}
	{
	\begin{tikzpicture}
		\input{./codes/simple_example_fbd.tex}
	\end{tikzpicture}
	}
	\caption{ Idealized Experiment I. 
		a) FIBD estimated interferometric 
		impulse responses corresponding to the Figure~\ref{fig:simple_example_bd1}a, 
		after fitting the
		interferometric channel outputs.
		b) Same as (a), except after white noise is added to the channel outputs.
		c) Estimated impulse responses from FPR by fitting the FIBD-outcome interferometric impulse responses in (a).
		d) Same as (c), except fitting the FIBD outcome in (b).
	}
	\label{fig:simple_example_fbd}
\end{figure*}

\begin{figure*}
	\begin{center}
		\IfFileExists{./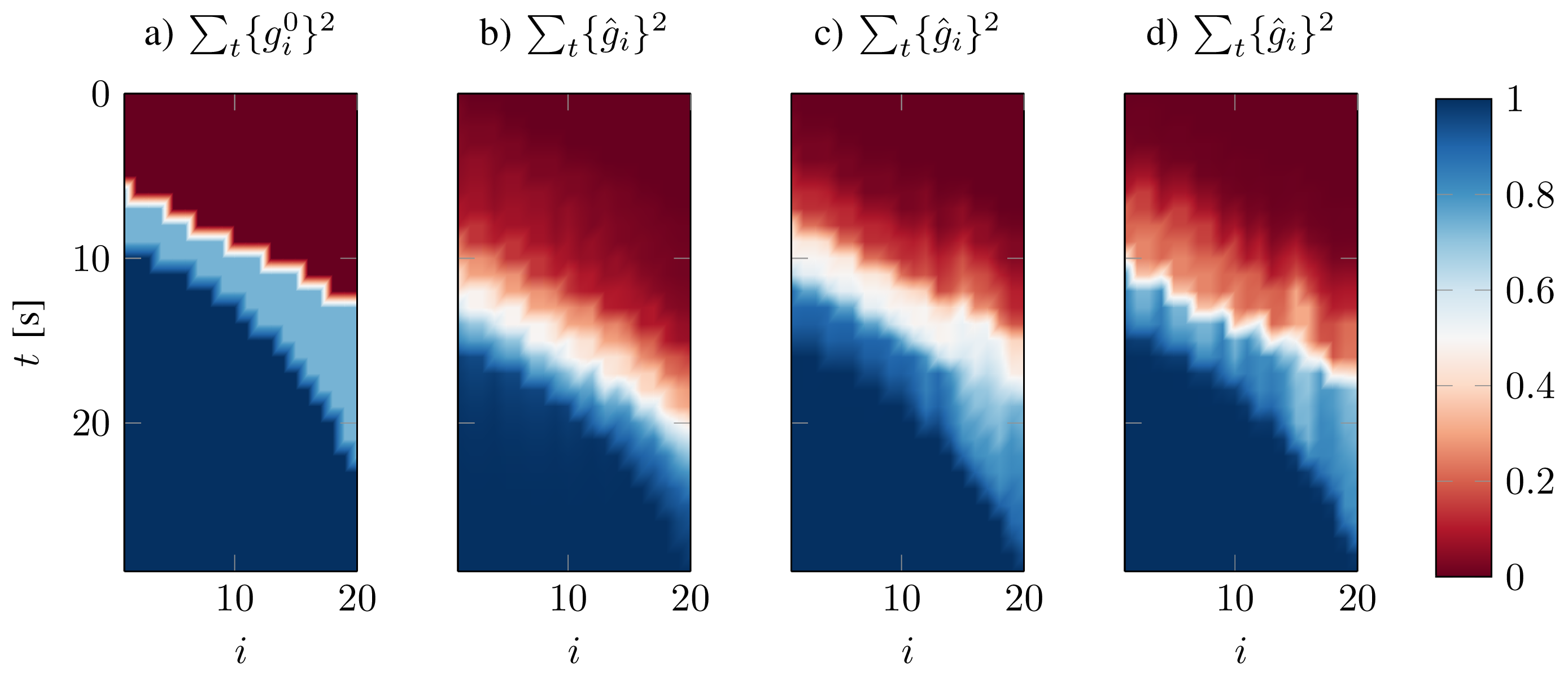}
		{
		\includegraphics[width=0.95\textwidth]{./FIG/simple_example_bd2.png}
			}
		{
	\begin{tikzpicture}
		\input{./codes/simple_example_bd2.tex}
	\end{tikzpicture}
		}
	\end{center}
	\caption{
		Idealized Experiment I.
	 Normalized cumulative energy of: a) true; b)---d) undesired impulse responses corresponding to the Figure~\ref{fig:simple_example_bd0}.
	 }
	\label{fig:simple_example_bd2}
\end{figure*}

To illustrate its non-uniqueness, we use 
three different initial estimates of 
$s$ and $\vecc{g_i}$ to observe the convergence to 
three different solutions that belong to $\mathbb{P}$.
The channel responses corresponding to these solutions are plotted in 
Figures~\ref{fig:simple_example_bd0}b--d.
At the convergence, the misfit (given in eq.~\ref{eqn:bd1}) 
in all these three cases $U(\hat{s},\vecc{\hat{g}_i})\lessapprox 10^{-6}$, justifying non-uniqueness.
Moreover, we notice that none of the 
solutions 
is desirable due to 
insufficient resolution.

\subsubsection*{FIBD}

In order to isolate the indeterminacy due to the 
amplitude spectrum of the unknown filter $\phi(t)$ in eq.~\ref{eqn:decont2}
and 
justify the use of the 
focusing constraint in eq.~\ref{eqn:fibd1},
we plot 
the true and undesirable impulse responses after cross-correlation
in the Figure~\ref{fig:simple_example_bd1}.
It can be easily noticed
that the true impulse-response cross-correlations corresponding to the first channel
are more focused at $t=0$ 
than the 
undesirable impulse-response cross-correlations.
The defocusing is caused by the ambiguity 
related to the amplitude spectrum of $\phi(t)$.
FIBD in Algorithm~\ref{alg:fibd} with $\vec{\alpha}=[\infty,0.0]$ 
resolves this ambiguity 
and
satisfactorily recovers the true
interferometric impulse responses $\vecc{g^0_{ij}}$, as plotted in Figure~\ref{fig:simple_example_fbd}a.
We regard the FIBD recovery be satisfactory in Figure~\ref{fig:simple_example_fbd}b
when the 
Gaussian white noise is added to the channel outputs so that the signal-to-noise (SNR) is $1\,$dB.

\subsubsection*{FPR}

In order to motivate the use of the second focusing constraint,
we plotted the 
normalized 
cumulative energy 
of the 
true and undesired impulse responses
in the Figures~\ref{fig:simple_example_bd2}.
It can be easily noticed that the fastest 
rate of energy buildup in time occurs in the 
case of the true impulse responses.
In other words, the energy of the 
true impulse responses is more front-loaded compared to 
undesired impulse responses, after neglecting an overall translation in time.
The FPR in Algorithm~\ref{alg:fpr} with $\vec{\beta}=[\infty, 0]$ 
satisfactorily recovers $\vecc{g^0_{i}}$ that are plotted in: the 
Figure~\ref{fig:simple_example_fbd}c --- utilizing $\vecc{g_{ij}}$ recovered from the noiseless channel outputs (Figure~\ref{fig:simple_example_fbd}a);
the Figure~\ref{fig:simple_example_fbd}d --- utilizing $\vecc{g_{ij}}$ 
recovered from the channel outputs (Figure~\ref{fig:simple_example_fbd}b) with Gaussian white noise.
Note that the overall time translation and scaling cannot be fundamentally determined.

\subsection{Idealized Experiment II}
This IBD-benchmark 
experiment with $\Nr=20$ $\tau=30$ and $T=400$
aims to reconstruct simpler  
interferometric impulse responses, plotted in 
Figure~\ref{fig:simple_example_ibd_fail}b, corresponding to the 
true impulse responses in Figure~\ref{fig:simple_example_ibd_fail}a.
A satisfactory recovery of $\vecc{g^0_{ij}}$ is not achievable without the focusing constraint ---
the IBD outcome in the Figure~\ref{fig:simple_example_ibd_fail}c
doesn't match the true 
interferometric impulse responses in the Figure~\ref{fig:simple_example_ibd_fail}b, unlike FIBD in the Figure~\ref{fig:simple_example_ibd_fail}d.

\begin{figure*}
	\centering
	\IfFileExists{./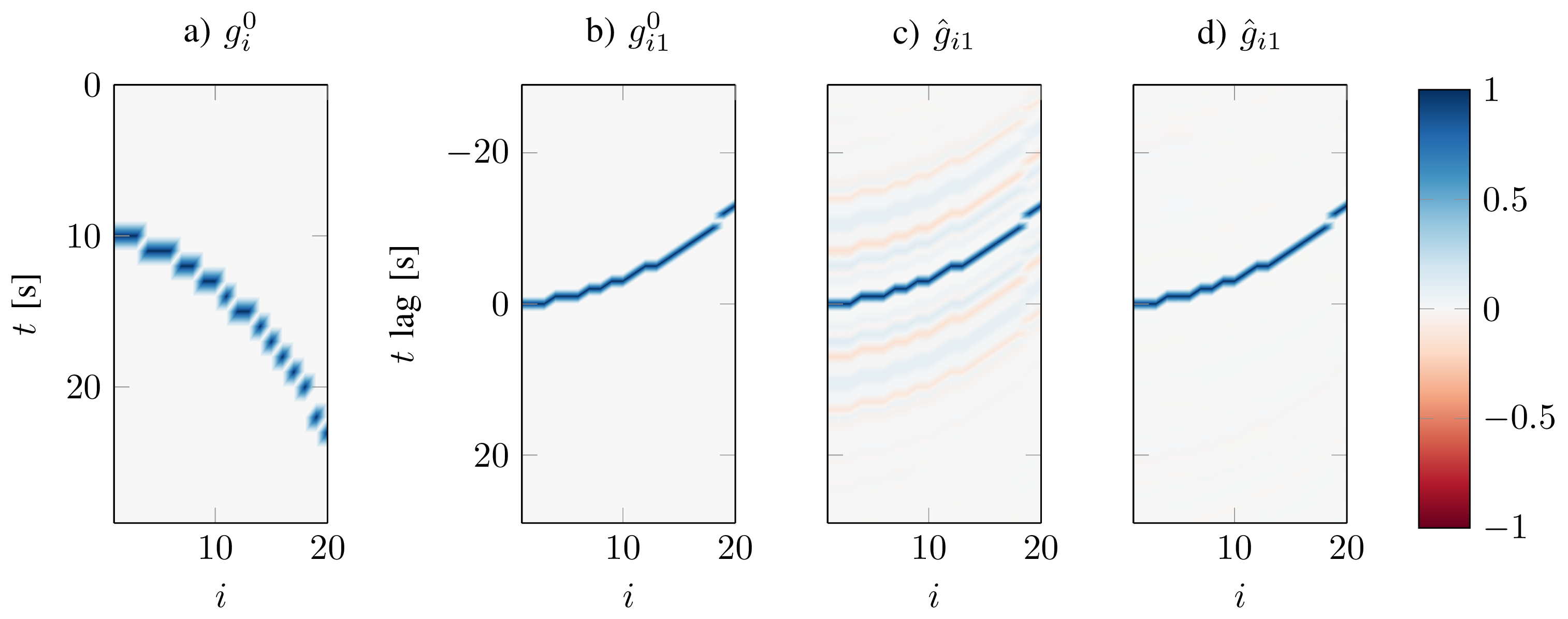}
	{
	\includegraphics[width=0.95\textwidth]{./FIG/simple_example_ibd_fail.png}
		}
	{
	\begin{tikzpicture}
		\input{./codes/simple_example_ibd_fail.tex}
	\end{tikzpicture}
	}
	\caption{Idealized Experiment II.
	Interferometric impulse responses:
	a) true; 
	b) estimated using IBD; c) estimated using FIBD.}
	\label{fig:simple_example_ibd_fail}
\end{figure*}

\subsection{Idealized Experiment III}
We consider another 
experiment with $\Nr=20$ and $\tau=30$
to
reconstruct the true impulse responses $\vecc{g^{0}_{i}}$ 
(plotted in Figure~\ref{fig:simple_example_pr_fail}a) 
by fitting their
cross-correlations
$\vecc{g^0_{ij}}$.
A satisfactory recovery of $\vecc{g^0_{i}}$ from $\vecc{g^0_{ij}}$ is not achievable without the focusing constraint ---
the outcome of LSPR, in Figure~\ref{fig:simple_example_pr_fail}b,
doesn't match the true impulse responses, in Figure~\ref{fig:simple_example_pr_fail}a, but is 
contaminated by the filter $\chi(t)$ in eq.~\ref{eqn:lspr_unknown}.
On the other hand, FPR results in the outcome (Figure~\ref{fig:simple_example_pr_fail}c) 
that is not contaminated by $\chi(t)$.
\begin{figure*}
	\centering
	\IfFileExists{./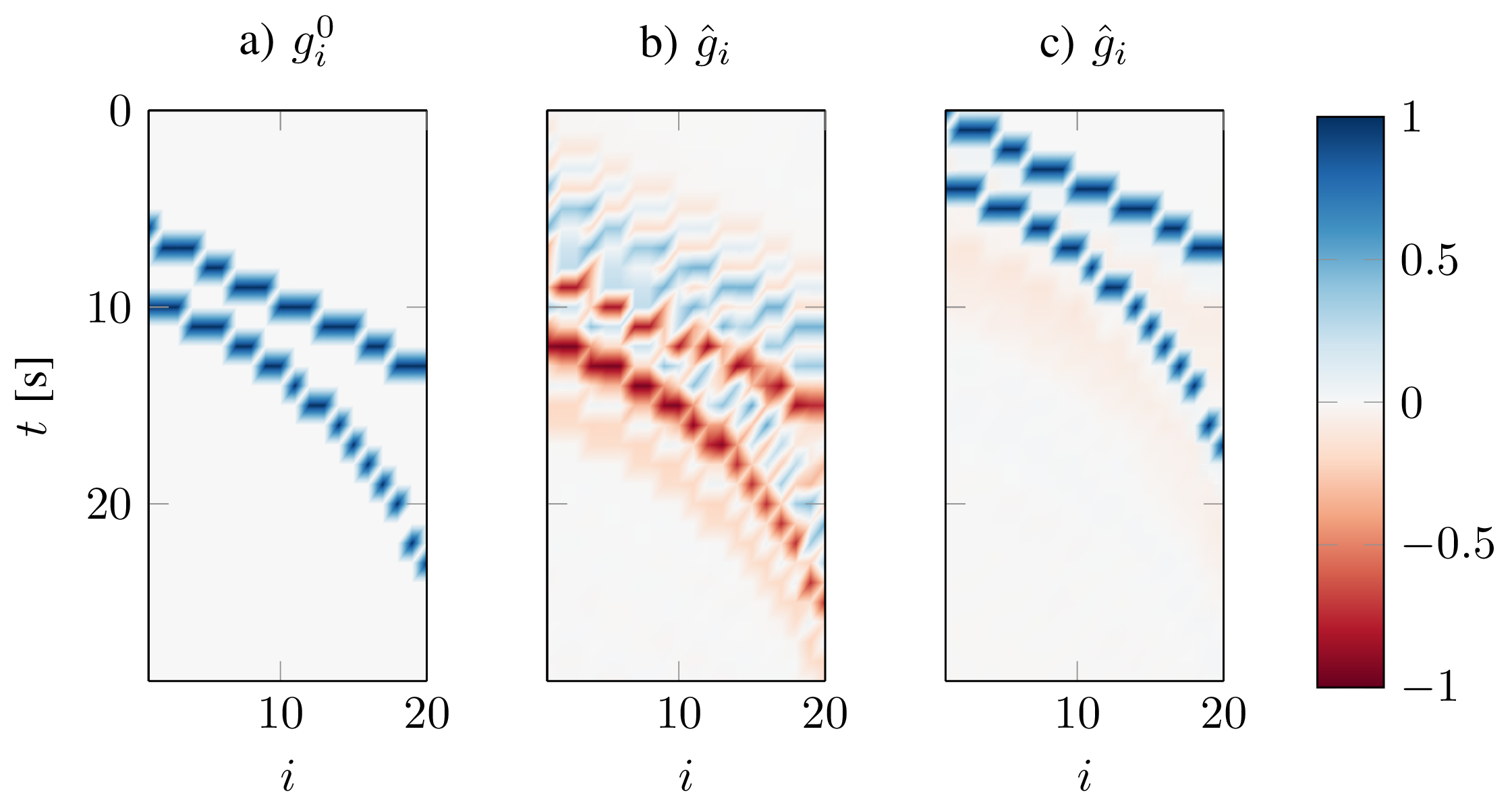}
	{
	\includegraphics[width=0.95\textwidth]{./FIG/simple_example_pr_fail.png}
		}
	{
	\begin{tikzpicture}
		\input{./codes/simple_example_pr_fail.tex}
	\end{tikzpicture}
	}
	\caption{Idealized Experiment III.
	a) True impulse responses. 
	b) Estimated impulse responses using LSPR.
	c) Estimated impulse responses using FPR.
	}
	\label{fig:simple_example_pr_fail}
\end{figure*}

\subsection{Idealized Experiment IV}
This 
experiment with $\Nr=20$,
$\tau=30$ and $T=400$  
aims to reconstruct the true 
interferometric impulse responses, plotted in 
Figure~\ref{fig:simple_example_fibd_fail}b, corresponding to the 
true impulse responses in Figure~\ref{fig:simple_example_fibd_fail}a.
The outcome of 
FIBD with $\vec{\alpha}=[\infty,0]$, plotted in Figure~\ref{fig:simple_example_fibd_fail}c,
doesn't clearly match the true 
interferometric impulse responses
because the channels are \emph{not} sufficiently dissimilar.
In this regard,
observe that the Figure-\ref{fig:simple_example_fibd_fail}a
true impulse responses 
at various channels 
$i$ differ only by a fixed time-translation instead 
of curving as in Figure~\ref{fig:simple_example_bd0}a.
\begin{figure*}
	\centering
	\IfFileExists{./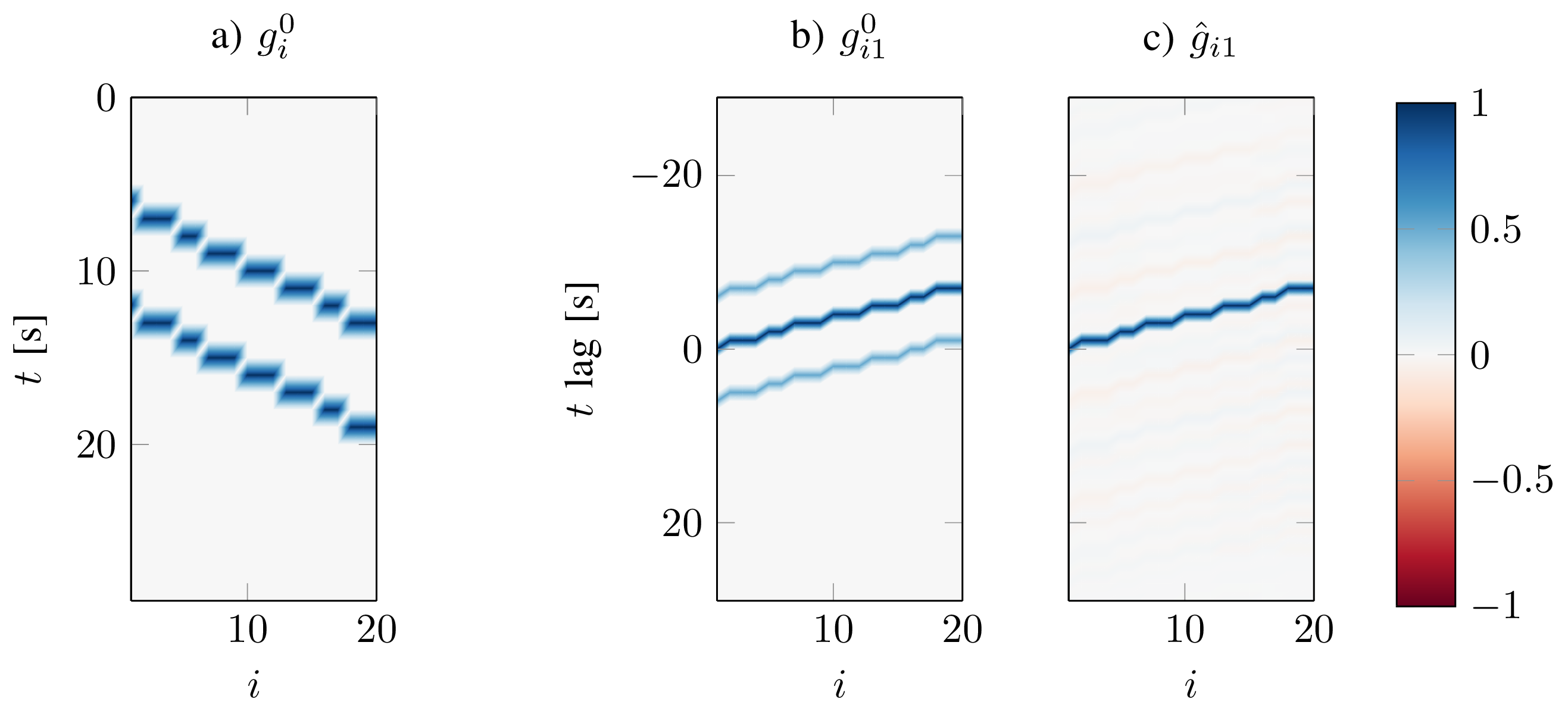}
	{
	\includegraphics[width=0.95\textwidth]{./FIG/simple_example_fibd_fail.png}
		}
	{
	\begin{tikzpicture}
		\input{./codes/simple_example_fibd_fail.tex}
	\end{tikzpicture}
	}
	\caption{ Idealized Experiment IV. 
		a) True impulse responses of channels that are not sufficiently dissimilar.
		b) True interferometric impulse responses corresponding to (a).
		c) FIBD estimated interferometric 
		impulse responses corresponding to (b), 
		after fitting the
		interferometric channel outputs.
	}
	\label{fig:simple_example_fibd_fail}
\end{figure*}

\subsection{Idealized Experiment V}
We consider another 
experiment with $\Nr=20$ 
and $\tau=30$
to reconstruct the true impulse responses $\vecc{g^{0}_{i}}$ 
(plotted in Figure~\ref{fig:simple_example_fpr_fail}a) that are 
\emph{not} front-loaded, by fitting their
cross-correlations
$\vecc{g^0_{ij}}$.
The FPR estimated impulse responses $\vecc{\hat{g}_{i}}$, plotted in 
Figure~\ref{fig:simple_example_fpr_fail}b, 
do not 
clearly depict the arrivals 
because there exists a spurious
$\chi\ne\delta$ obeying 
eq.~\ref{eqn:lspr_unknown}, such that $\vecc{g^0_{i}\ast \chi}$ are
more front-loaded than $\vecc{g^0_{i}}$.
We observe that
FPR typically doesn't result in a favorable outcome 
if the impulse responses are not front-loaded.
Otherwise, 
the 
front-loaded 
$\vecc{g^{0}_{i}}$, plotted in Figure~\ref{fig:simple_example_fpr_fail}c, 
are successfully reconstructed in Figure~\ref{fig:simple_example_fpr_fail}d,
except 
for an overall translation in time.

\begin{figure*}
	\centering
	\IfFileExists{./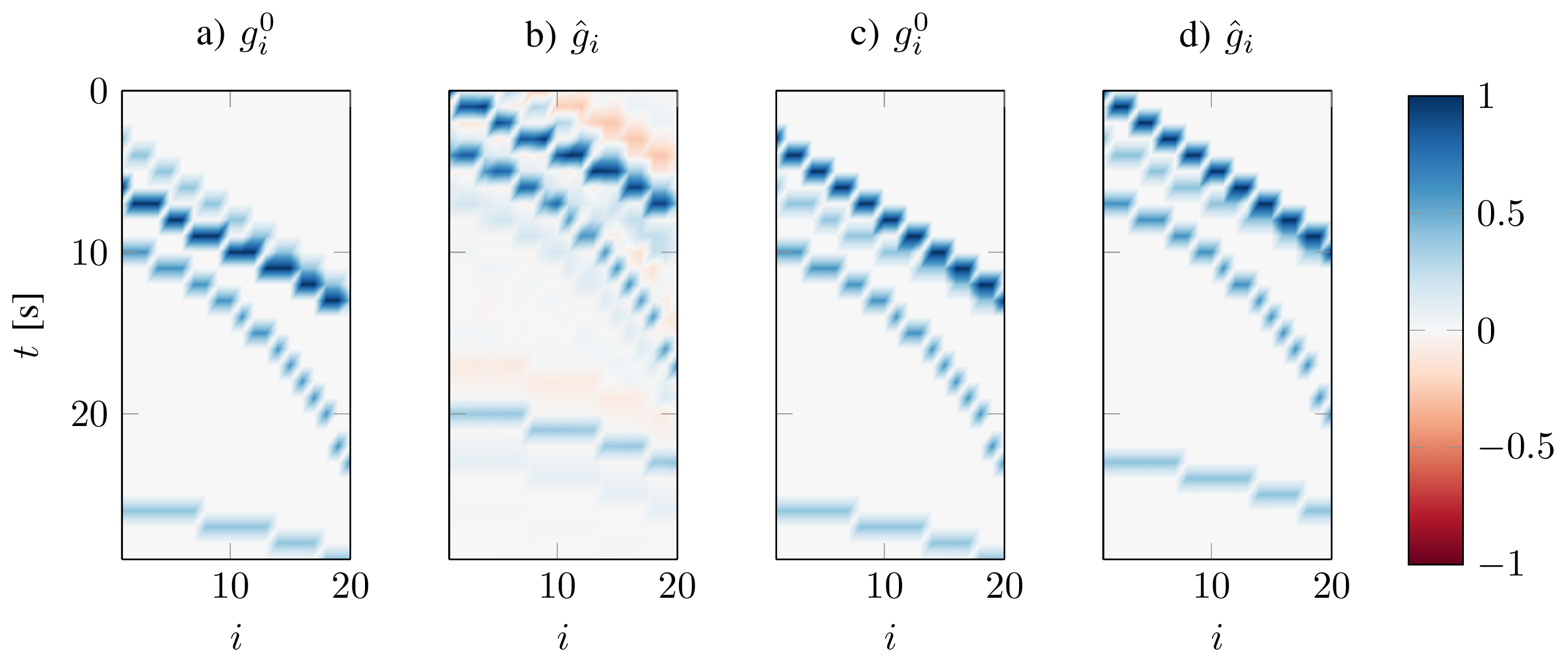}
	{
	\includegraphics[width=0.95\textwidth]{./FIG/simple_example_fpr_fail.png}
		}
	{
	\begin{tikzpicture}
		\input{./codes/simple_example_fpr_fail.tex}
	\end{tikzpicture}
	}
	\caption{
		Idealized Experiment V.
		a) True impulse responses that are not front-loaded.
		b) FPR estimated impulse responses corresponding to (a), 
		after fitting the
		true interferometric impulse responses.
		c) Same as (a), but front-loaded.
		d) Same as (b), but corresponding to (c).
	}
	\label{fig:simple_example_fpr_fail}
\end{figure*}

\section{Green's function Retrieval}

\begin{figure*}
	\centering
	\centering
	\IfFileExists{./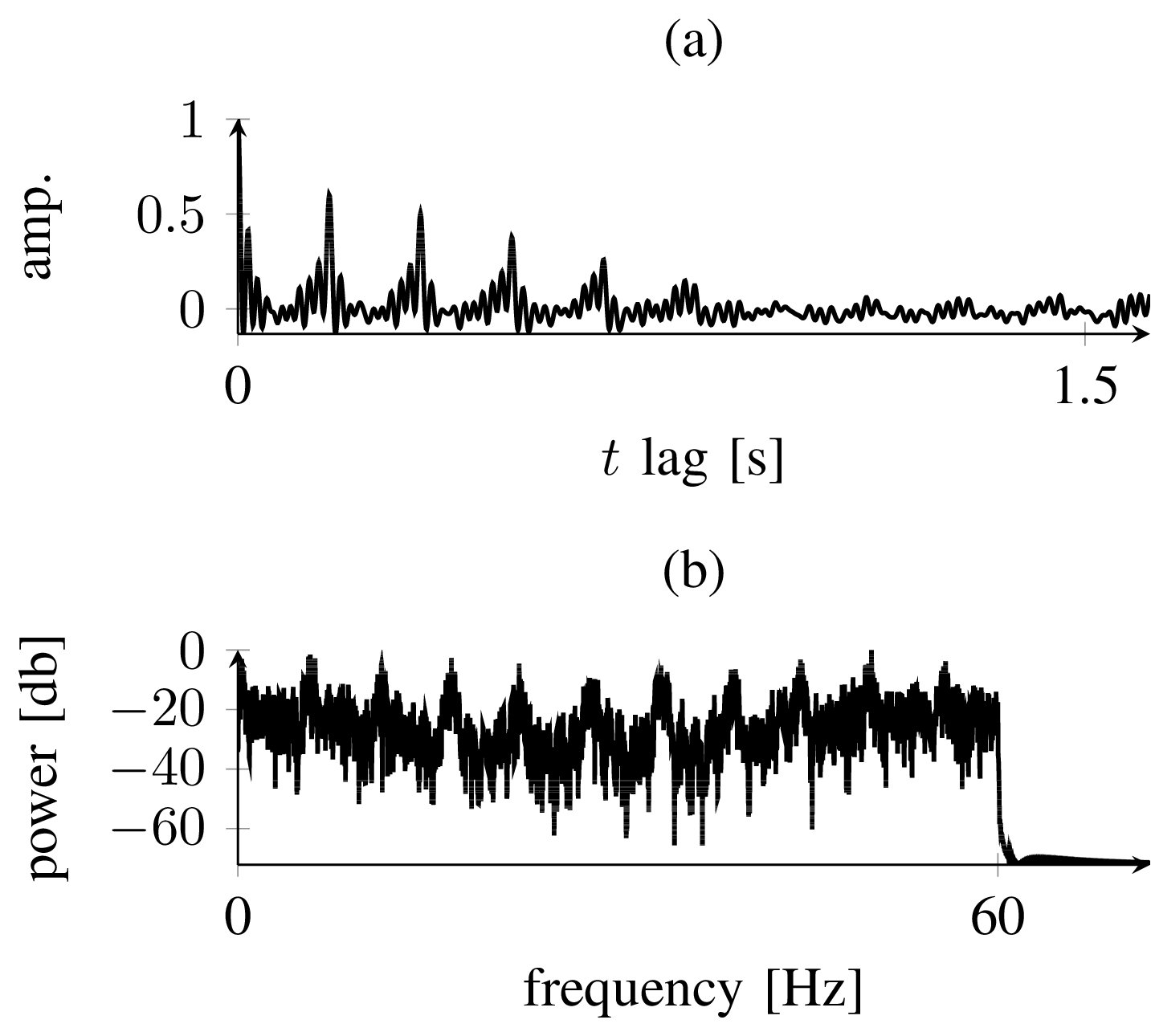}
	{
	\includegraphics[width=0.5\textwidth]{./FIG/wav.png}
		}
		{
	\input{codes/main_wavobs.tex}
	}
 \caption{
	 Source signature for the seismic experiment.
	 (a) auto-correlation that contaminates the interferometric Green's functions in the time domain ---
	 only 5\% of $T$ is plotted;
	 (b) power spectrum, where the Nyquist frequency is $60\,$Hz.
 }
 \label{fig:wavobs}
\end{figure*}
%
%

\begin{figure*}
	\centering
	\IfFileExists{./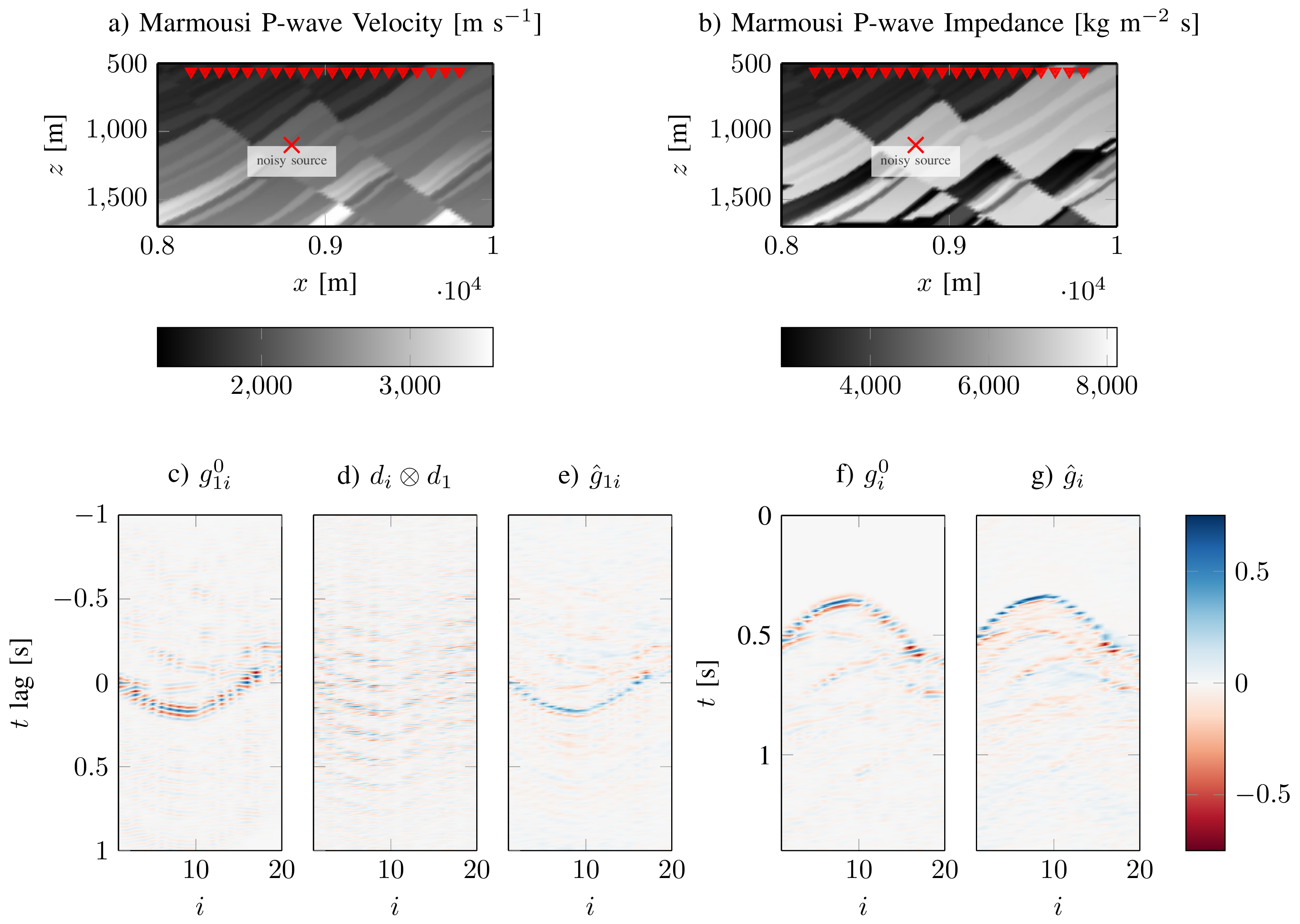}
	{
	\includegraphics[width=0.95\textwidth]{./FIG/marmousi.png}
		}
	{
	\begin{tikzpicture}
		\input{./codes/marmousi.tex}
	\end{tikzpicture}
	}
 \caption{
	 Seismic Experiment.
	 a) Acoustic velocity model for wave propagation.
	 b) Acoustic impedance model depicting interfaces that reflect waves.
	 c) True interferometric Green's functions.
	 d) Seismic interferometry by cross-correlation.
	 e) FIBD estimated interferometric Green's functions.
	 f) True Green's functions.
	 g) FBD estimated Green's functions.
 }
 \label{fig:reflec_main_marm}
\end{figure*}

Finally, we consider a more realistic scenario 
involving seismic-wave propagation
in a 
complex 2-D structural model, which is 
commonly known as the Marmousi model \citep{brougois1990marmousi} 
in exploration seismology. 
The Marmousi 
P-wave velocity and impedance plots are  
in the Figures~\ref{fig:reflec_main_marm}a and \ref{fig:reflec_main_marm}b, respectively.
We inject an unknown band-limited source signal, e.g., due to a drill bit,
into this model for $30\,$s, such that $T=3600$. 
The signal's auto-correlation 
and power spectrum are plotted in Figures~\ref{fig:wavobs}a and \ref{fig:wavobs}b, respectively.
We used an acoustic time-domain 
staggered-grid finite-difference solver for wave-equation modeling.
The recorded seismic data 
at twenty 
receivers spaced roughly $100\,$m apart, placed at a depth of roughly $500\,$m,
can be modeled as
the output of a linear system that convolves the source signature
with the Earth's
impulse response, i.e., its Green's function.
We recall that
in the seismic context:
\begin{itemize}

	\item the impulse responses $\vecc{g_i}$ correspond 
	to the 
	unique subsurface Green's function {$g(\vec{x},t)$} 
	evaluated at the receiver locations {$\vecc{\vec{x}_i}$},
	where the seismic-source signals are recorded;

	\item the channel outputs $\vecc{d_i}$ correspond to 
		the noisy subsurface wavefield $d(\vec{x},t)$ 
	recorded at the receivers 
		only for $\{0,\ldots,T\}$ ---  
		we are assuming that the source may be arbitrarily on or off 
	throughout this time interval, just as in usual drilling operations;

	\item $\tau$ denotes the propagation time
	necessary for the 
	seismic energy,
	including multiple scattering, 
	traveling 
	from the source to a total of $\Nr$ receivers,
	to decrease below an ad-hoc threshold.
\end{itemize}
The goal of 
this experiment
is to reconstruct the subsurface Green's function vector $\vecc{g_{i}}$ 
that contains:
\begin{inparaenum}
	\item the direct arrival from the source to the receivers and
\item   the scattered waves from
	various interfaces in the model.
\end{inparaenum}
The `true' Green's functions $g^0_{i}$ and the  
interferometric Green's functions $g^0_{ij}$,
in Figures~\ref{fig:reflec_main_marm}f and \ref{fig:reflec_main_marm}c,
are generated following these steps: 
\begin{inparaenum}
\item get data for $1.5\,$s ($\tau=180$) using a Ricker source wavelet (basically a degree-2 Hermite function modulated to a 
	peak frequency of 20$\,$Hz);
\item create cross-correlated data necessary for $\vecc{g^0_{ij}}$; {and}
\item perform a deterministic deconvolution on the data 
	using the Ricker wavelet.
\end{inparaenum}
Observe that we have chosen
the 
propagation time to be 
$1.5\,$s, such that $T/\tau=20$.

Seismic interferometry by cross-correlation (see eq.~\ref{eqn:intro1})
fails to retrieve 
direct and the scattered arrivals in the 
true interferometric
Green's functions,
as the  
cross-correlated data $\vecc{d_{ij}}$, plotted in Figure~\ref{fig:reflec_main_marm}d,
is contaminated by the 
auto-correlation of the source signature (Figure~\ref{fig:wavobs}a).
Therefore, 
we use FBD to first extract the interferometric Green's functions by FIBD, plotted in 
the Figure~\ref{fig:reflec_main_marm}e, and then recover the 
Green's functions, plotted in the Figure~\ref{fig:reflec_main_marm}g, using FPR.
Notice that the FBD estimated Green's functions 
clearly depict 
the direct and the scattered arrivals, confirming that 
our method
doesn't suffer from the complexities in the subsurface models.

\section{Conclusions}
\label{sec:conc}
Focused blind deconvolution (FBD)
solves a series of 
two optimization problems 
in order to perform
multichannel blind deconvolution (BD), 
where both the unknown impulse responses and the unknown source signature are estimated 
given the channel outputs.
It is designed for a BD problem where the impulse responses are 
supposed to be sparse, front-loaded and shorter in duration compared to the channel outputs; 
as in the case of seismic inversion with a noisy source.
The optimization problems use focusing constraints to 
resolve the indeterminacy inherent to the traditional BD.
The first problem 
considers fitting the interferometric channel outputs and 
focuses the energy of the impulse-response auto-correlations
at the zero lag to estimate the interferometric impulse responses
and the source auto-correlation.
The second problem completes FBD by 
fitting the estimated interferometric impulse responses, 
while 
focusing the energy 
of the most front-loaded channel 
at the zero time.
FBD doesn't require any support constraints on the unknowns.
We have demonstrated the benefits of FBD 
using seismic 
experiments.  
%
%
%

	%

\section{Acknowledgements}

The material is based upon work
assisted by a grant from Equinor. 
Any opinions, findings, and conclusions or recommendations
expressed in this material 
are those of the authors
and do not necessarily 
reflect the views of Equinor.
The authors thank 
Ali Ahmed, Antoine Paris, Dmitry Batenkov and Matt Li from MIT for helpful discussions, and Ioan Alexandru Merciu
from Equinor for
his informative review and commentary of a draft version. 
LD is also supported by AFOSR grant FA9550-17-1-0316, and NSF grant DMS-1255203.

\onecolumn
\bibliographystyle{seg}
\bibliography{sample,../ica_seg/paper/ica_ref}

\appendix
\section{Appendix}

In this 
appendix, 
we present a simple justification of the ability of a focusing functional on the autocorrelation to select for sparsity, in a setting where $\ell_1$ minimization is unable to do so. This setting is the special case of a vector with \emph{nonnegative} entries, made less sparse by convolution with another vector with nonnegative entries as well. This scenario is not fully representative of the more general formulation assumed in this paper, where cancelations may occur because of alternating signs. It seems necessary, however, to make an assumption of no cancelation (like positivity) in order to obtain the type of comparison result that we show in this section. 

Consider two infinite sequences $f_i$ and $\phi_j$, for $i, j \in \Z$ (the set of integers), with sufficient decay so that all the expressions below make sense, and all the sum swaps are valid. Assume that $f_i \geq 0$ and $\phi_i \geq 0$ for all $i \in \Z$, not identically zero. Let
\[
g_j = (f * \phi)_j = \sum_{i \in \Z} f_i \phi_{j-i},
\]
which obviously also obeys $g_i \geq 0$ for all $i \in Z$. Assume the normalization condition $\sum_{i \in \Z} \phi_i = 1$.

Now consider the autocorrelations
\[
F_j = (f \otimes f)_j = \sum_i f_i f_{j+i}, \qquad G_j = (g \otimes g)_j = \sum_i g_i g_{j+i},
\]
and a specific choice of focusing functional,
\[
J_F = \sum_{j \in Z} j^2 F_j, \qquad J_G = \sum_{j\in \Z} j^2 G_j.
\]

\begin{proposition}
\[
J_G \geq J_F,
\]
with equality if and only if $\phi_i$ is the Kronecker $\delta_{i0}$
.
\end{proposition}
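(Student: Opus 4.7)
The plan is to reduce the comparison of $J_G$ to $J_F$ to a single, transparent algebraic identity obtained by passing through the autocorrelation of $\phi$. The cornerstone is the factorization
\[
G = g \otimes g = (f * \phi) \otimes (f * \phi) = (f \otimes f) * (\phi \otimes \phi) = F * \Phi,
\]
where I set $\Phi = \phi \otimes \phi$. This identity is standard and follows by writing out the sums or, equivalently, by noting that in the Fourier domain both sides equal $|\hat{f}|^2 |\hat{\phi}|^2$. I would establish it explicitly for the discrete sequences at hand by direct double-sum manipulation, which is routine given the decay assumption.

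Next I would plug $G_j = \sum_k F_k \Phi_{j-k}$ into the definition of $J_G$, interchange the order of summation, and substitute $m = j-k$ inside:
\[
J_G = \sum_j j^2 \sum_k F_k \Phi_{j-k} = \sum_k F_k \sum_m (m+k)^2 \Phi_m = \sum_k F_k \Bigl( \sum_m m^2 \Phi_m + 2k \sum_m m\,\Phi_m + k^2 \sum_m \Phi_m \Bigr).
\]
Now I invoke two properties of $\Phi$: it is even, $\Phi_{-m} = \Phi_m$, so the first-moment sum $\sum_m m\,\Phi_m$ vanishes; and $\sum_m \Phi_m = \bigl(\sum_i \phi_i\bigr)^2 = 1$ by the normalization assumption. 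This collapses the expression to the clean identity
\[
J_G = J_F + \Bigl(\sum_k F_k\Bigr) \Bigl(\sum_m m^2 \Phi_m\Bigr) = J_F + \Bigl(\sum_i f_i\Bigr)^2 \sum_m m^2 \Phi_m.
\]

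The inequality $J_G \geq J_F$ is now immediate: the prefactor $\bigl(\sum_i f_i\bigr)^2$ is nonnegative, and $\sum_m m^2 \Phi_m \geq 0$ since $\Phi_m = \sum_i \phi_i \phi_{i+m} \geq 0$ for every $m$ by nonnegativity of $\phi$. For the equality case, nonnegativity and non-triviality of $f$ give $\sum_i f_i > 0$, so equality forces $\sum_m m^2 \Phi_m = 0$. Rewriting this moment as
\[
\sum_m m^2 \Phi_m = \sum_{i,\ell} (\ell - i)^2 \phi_i \phi_\ell,
\]
which is twice the variance of a random variable with distribution $\phi$, I conclude that $\phi$ must be supported on a single index, and the normalization pins that atom to mass $1$ (located at the origin under the convention of the statement).

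I do not anticipate serious obstacles: the argument is an exercise in moment algebra once one recognizes the identity $G = F * \Phi$. The only point worth care is the equality clause, where the positivity hypothesis on $\phi$ is essential — without it, autocorrelation moments can cancel and $\phi$ need not be a delta. This matches the remark in the appendix's preamble that a positivity assumption is needed to rule out sign cancelations.
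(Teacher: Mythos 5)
Your proof is correct, and it organizes the argument differently from the paper. The paper writes $J_F$ and $J_G$ as quadratic forms in $f$ with kernels $K_{jk}=(j-k)^2$ and $L_{jk}=\sum_{m,n}\left((j-k)-(m-n)\right)^2\phi_m\phi_n$, lower-bounds the contribution of each fixed difference $a=m-n$ by $J_F$ (the cross term cancels by the antisymmetry of $j-k$ against the symmetry of $f_jf_k$), and concludes by viewing $J_G$ as a convex combination with weights $\phi_m\phi_n$. You instead factor $G=F\ast\Phi$ with $\Phi=\phi\otimes\phi$ and derive the exact identity $J_G=J_F+\bigl(\sum_i f_i\bigr)^2\sum_m m^2\Phi_m$, where $\sum_m m^2\Phi_m=\sum_{i,\ell}(\ell-i)^2\phi_i\phi_\ell$ is twice the variance of $\phi$. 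The two computations are equivalent at bottom --- your second moment of $\Phi$ is precisely the paper's $\sum_{m,n}(m-n)^2\phi_m\phi_n$, and the vanishing first moment of $\Phi$ is the paper's cancellation of the linear term --- but your packaging buys something: an exact formula for the excess $J_G-J_F$ rather than a term-by-term inequality, which makes the equality analysis immediate and quantifies the defocusing penalty as proportional to $\mathrm{Var}(\phi)$.

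One caveat, which your proof shares with the paper's own: vanishing variance forces $\operatorname{supp}\phi=\{c\}$ for a \emph{single} integer $c$, not necessarily $c=0$. Indeed $\phi_i=\delta_{ic}$ with $c\neq 0$ merely shifts $g$, leaves $G=F$, and so gives $J_G=J_F$; the normalization $\sum_i\phi_i=1$ pins the mass of the atom but not its location. Hence the ``only if'' clause of the proposition holds only modulo time translation, and the correct statement of the equality case is that $\phi$ is a (possibly shifted) Kronecker delta. The paper's proof glosses this by asserting $\operatorname{supp}\phi=\{0\}$; your parenthetical ``located at the origin under the convention of the statement'' flags the issue more honestly, but you should either restate the equality case accordingly or appeal explicitly to the paper's standing convention (stated in the experiments section) that an overall translation in time cannot be fundamentally determined.
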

\begin{proof}
All sums run over $\Z$. Start by observing
\[
J_F = \sum_j \sum_k K_{jk} f_j f_k, \qquad K_{jk} = (j-k)^2,
\]
and
\[
J_G = \sum_j \sum_k L_{jk} f_j f_k, \qquad L_{jk} = \sum_m \sum_n \left( (j-k) - (m-n) \right)^2 \phi_m \phi_n.
\]
For any particular value $m - n = a$, we have
\begin{align*}
\sum_j \sum_k \left( (j-k) - a \right)^2 f_j f_k &= a^2 \sum_j \sum_k f_j f_k + \sum_j \sum_k (j-k)^2 f_j f_k \\
&\geq J_F,
\end{align*}
(the term linear in $j-k$ drops because $j-k$ is antisymmetric in $j$ and $k$, while $f_j f_k$ is symmetric), with equality if and only if $a = 0$.

Now $J_G$ is a convex combination of such contributions:
\begin{align*}
\sum_m \sum_n \left[ \sum_j \sum_k \left( (j-k) - (m-n) \right)^2 f_j f_k \right] \phi_m \phi_n & \geq \sum_m \sum_n \left[ J_F \right]  \phi_m \phi_n \\
&= J_F
\end{align*}
with equality if and only if the cartesian product supp $\phi \; \times $ supp $\phi$ contains only the diagonal $m = n$. This latter scenario only arises when supp $\phi = \{ 0 \}$, which is only compatible with $\sum_i \phi_i = 1$ when $\phi_i = \delta_{i0}$.
\end{proof}

In contrast, notice that $\sum_i f_i = \sum_i g_i$, hence $f$ and $g$ cannot be discriminated with the $\ell_1$ norm. The $\ell_1$ norm is unable to measure the extent to which the support of $f$ was ``spread" by convolution with $\phi$, when $\sum \phi_i = 1$, and when all the functions are nonnegative.

The continuous counterpart of this result, for nonnegative functions $f(t)$ and $g(t) = \int f(s) \phi(t-s) ds$, with nonnegative $\phi$ such that $\int \phi(t) dt = 1$ in the sense of measures, involves the autocorrelations
\[
F(t) = (f \otimes f)(t) = \int f(s) f(s+t) ds, \qquad G(t) = (g \otimes g)(t),
\]
and focusing functionals
\[
J_F = \int t^2 F(t) dt, \qquad J_G = \int t^2 G(t) dt.
\]
Then, $J_G \geq J_F$, with equality if and only if $\phi(t) = \delta(t)$, the Dirac delta.

\end{document}